\newcommand{\otyp}{\mathsf{o}}
\newcommand{\arr}{\mathbin{\to}}
\newcommand{\Arr}{\mathbin{\Rightarrow}}
\newcommand{\set}[1]{\{#1\}}
\newcommand{\dom}{\mathrm{dom}}
\newcommand{\setof}[2]{\{#1\mid#2\}}
\newcommand{\Nat}{\mathbb{N}}
\newcommand{\Aa}{\mathcal{A}}
\newcommand{\Gg}{\mathcal{G}}
\newcommand{\Hh}{\mathcal{H}}
\newcommand{\Oo}{\mathcal{O}}
\newcommand{\Rr}{\mathcal{R}}
\newcommand{\Xx}{\mathcal{X}}
\newcommand{\Yy}{\mathcal{Y}}
\newcommand{\Zz}{\mathcal{Z}}
\newcommand{\varS}{\mathsf{S}}
\newcommand{\varT}{\mathsf{T}}
\newcommand{\varX}{\mathsf{X}}
\newcommand{\varY}{\mathsf{Y}}
\newcommand{\varZ}{\mathsf{Z}}
\newcommand{\vary}{\mathsf{y}}
\newcommand{\varz}{\mathsf{z}}
\newcommand{\ord}{\mathsf{ord}}
\newcommand{\gar}{\mathsf{gar}}
\newcommand{\tp}{\mathsf{tp}}
\newcommand{\pr}{\mathsf{pr}}
\newcommand{\tr}{\mathsf{tr}}
\newcommand{\trt}{\mathsf{tr}^\mathsf{t}}
\newcommand{\RT}{\mathsf{RT}}
\newcommand{\BT}{\mathsf{BT}}
\newcommand{\BText}{\mathsf{BT}^\mathsf{ext}}
\newcommand{\BTsimpl}{\mathsf{BT}^\mathsf{s}}
\newcommand{\Eve}{\mathrm{Eve}}
\newcommand{\Adam}{\mathrm{Adam}}
\newcommand{\player}{\wp}
\newcommand{\subtree}{\!{\restriction}\!}
\newcommand{\restr}{\!{\restriction}\!}
\newcommand{\rew}{\longrightarrow}
\newcommand{\erew}{\rightsquigarrow}
\newcommand{\simpl}{\rightarrowtail}
\newcommand{\esubst}[3]{#1\llparenthesis#2/#3\rrparenthesis}
\newcommand{\esubstdots}[5]{#1\llparenthesis#2/#3\rrparenthesis\dots\llparenthesis#4/#5\rrparenthesis}
\newcommand{\symb}[1]{\langle#1\rangle}
\newcommand{\subst}[1]{[#1]}
\newcommand{\mapch}[1]{[#1]}
\newcommand{\scope}[1]{[#1]}
\renewcommand{\exp}{\mathit{exp}}
\Crefname{lemma}{Lemma}{Lemmata}
\Crefname{equation}{Equality}{Equalities}
\Crefname{equalities}{Equalities}{Equalities}\creflabelformat{equalities}{(#2#1#3)}
\Crefname{case-br}{Case}{Cases}\creflabelformat{case-br}{(#2#1#3)}
\Crefname{case}{Case}{Cases}
\title{Higher-Order Model Checking Step by Step}
\author{Paweł Parys}{Institute of Informatics, University of Warsaw, Poland}{parys@mimuw.edu.pl}{https://orcid.org/0000-0001-7247-1408}{}
\authorrunning{P. Parys}
\keywords{Higher-order recursion schemes, Parity automata, Model-checking, Transformation, Order reduction}
\begin{document}

\maketitle

\begin{abstract}
	We show a new simple algorithm that solves the model-checking problem for recursion schemes:
	check whether the tree generated by a given higher-order recursion scheme is accepted by a given alternating parity automaton.
	The algorithm amounts to a procedure that transforms a recursion scheme of order $n$ to a recursion scheme of order $n-1$, preserving acceptance, and increasing the size only exponentially.
	After repeating the procedure $n$ times, we obtain a recursion scheme of order $0$, for which the problem boils down to solving a finite parity game.
	Since the size grows exponentially at each step, the overall complexity is $n$\textsf{-EXPTIME}, which is known to be optimal.
	More precisely, the transformation is linear in the size of the recursion scheme,
	assuming that the arity of employed nonterminals and the size of the automaton are bounded by a constant;
	this results in an \textsf{FPT} algorithm for the model-checking problem.

	Our transformation is a generalization of a previous transformation of the author (2020), working for reachability automata in place of parity automata.
	The step-by-step approach can be opposed to previous algorithms solving the considered problem ``in one step'', being compulsorily more complicated.
\end{abstract}

\section{Introduction}

	Recursion schemes are faithful and algorithmically manageable abstractions of the control flow of programs involving higher-order functions~\cite{Kobayashi-jacm}.
	Such functions are nowadays widely used not only in functional programming languages such as Haskell and the OCAML family,
	but also in mainstream languages such as Java, JavaScript, Python, and C++.
	Simultaneously, the formalism of recursion schemes is equivalent via direct translations to simply-typed $\lambda Y$-calculus~\cite{lambdaY}.
	Collapsible pushdown systems~\cite{collapsible-translation} and ordered tree-pushdown systems~\cite{tree-pushdown} are other equivalent formalisms.
	Recursion schemes cover some other models such as indexed grammars~\cite{indexed} and ordered multi-pushdown automata~\cite{multi-pushdown}.

	The most celebrated algorithmic result in the analysis of recursion schemes is the decidability of the \emph{model-checking problem} against regular properties of trees:
	given a recursion scheme $\Gg$ and a parity tree automaton $\Aa$, one can decide whether the tree generated by $\Gg$ is accepted by $\Aa$~\cite{Ong-schemes}.
	This fundamental result has been reproved several times, that is, using collapsible higher-order pushdown automata~\cite{collapsible-orig},
	intersection types~\cite{KobayashiOng-types}, Krivine machines~\cite{Krivine},
	and it has been extended in diverse directions such as global model checking~\cite{global}, logical reflection~\cite{reflection}, effective selection~\cite{effective-selection},
	and a transfer theorem via models of lambda-calculus~\cite{models}.
	The model-checking problem for recursion schemes of order $n$ is complete for $n$-fold exponential time~\cite{Ong-schemes}.
	Despite this hardness result, the model-checking problem can be solved efficiently on multiple nontrivial examples,
	thanks to the development of several recursion-scheme model checkers~\cite{practical-apta, travmc2, horsats}
	(including some model checkers that work only for automata models weaker than parity tree automata~\cite{trecs, gtrecs, horsat, travmc, preface}).

	In this paper, we give a new simple algorithm solving the model-checking problem for recursion schemes, mentioned above.
	The algorithm amounts to a procedure that transforms a recursion scheme of order $n$ to a recursion scheme of order $n-1$, preserving acceptance, and increasing the size only exponentially.
	After repeating the procedure $n$ times, we obtain a recursion scheme of order $0$, for which acceptance boils down to winning a finite parity game.
	Since the size grows exponentially at each step, we reach the optimal overall complexity of $n$-fold exponential time.
	In a more detailed view, the complexity looks even better:
	the size growth is exponential only in the arity of types appearing in the recursion scheme, and in the size of the parity automaton;
	if these two parameters are bounded by a constant, the transformation is linear in the size of the recursion scheme.
	Since solving a finite parity game is \textsf{FPT} in the number of priorities~\cite{Calude},
	our algorithm for the the model-checking algorithm is \textsf{FTP} in the two parameters.%
	\footnote{This is not new.
		Actually, most previous algorithms reduce the model-checking problem to the problem of solving a parity game
		whose size is polynomial (for a polynomial of a fixed degree, for some algorithms just linear) in the size of the input,
		assuming that the arity of types appearing in the recursion scheme and the size of the parity automaton are fixed.
		Thus, only the method introduced by us is new, not the complexity results.}

	The main difference between our algorithm and all the others is that we solve the problem step by step, repeatedly reducing the order by one,
	while most previous algorithms work ``in one step'', being compulsorily more complicated.
	The only algorithms that have been reducing the order by one, were algorithms using collapsible pushdown automata~\cite{collapsible-orig,reflection,effective-selection}.
	Notice, however, that these algorithms: first, are very technical; second, are contained only in unpublished appendices and in an arXiv paper~\cite{collapsible-arxiv};
	third, if we want to use them for recursion schemes,
	it is necessary to employ a (nontrivial) translation from recursion schemes to collapsible pushdown automata~\cite{collapsible-translation,lambdaY,effective-selection}.
	A reduction of order was also possible for a subclass of recursion schemes, called \emph{safe} recursion schemes~\cite{easy-trees},
	but it was not known how to extend it to all recursion schemes.

	The transformation presented in this paper generalizes of a previous transformation of the author~\cite{trans-nonempty},
	working for reachability automata in place of parity automata.
	It has also a close relationship with a transformation given by Asada and Kobayashi~\cite{word2tree}.

\section{Preliminaries}\label{sec:prelim}

	For a number $k\in\Nat$ we write $\scope{k}$ for $\set{1,\dots,k}$.
	For any relation $\rew$ we write $\rew^*$ for the reflexive transitive closure of $\rew$.

	For a function $Z$ we write $Z\mapch{z\mapsto r}$ to denote the function that maps $z$ to $r$ while all other elements of the domain of $Z$ are mapped as in $Z$.
	Likewise, we write $Z\mapch{z_i\mapsto r_i\mid i\in I}$ to denote the function that maps $z_i$ to $r_i$ for all $i\in I$, while all other elements of the domain of $Z$ are mapped as in $Z$.
	We also use this notation without the ``$Z$'' part, for a function $Z$ with empty domain.

    \subparagraph{Recursion schemes.}

	The set of \emph{(simple) types} is constructed from a unique ground type $\otyp$ using a binary operation $\arr$;
	namely $\otyp$ is a type, and if $\alpha$ and $\beta$ are types, so is $\alpha\arr\beta$.
	By convention, $\arr$ associates to the right, that is, $\alpha\arr\beta\arr\gamma$ is understood as $\alpha\arr(\beta\arr\gamma)$.
	We often abbreviate $\underbrace{\alpha\arr\dots\arr\alpha}_\ell\to\beta$ as $\alpha^\ell\arr\beta$.
	The \emph{order} of a type $\alpha$, denoted $\ord(\alpha)$, is defined by induction:
	$\ord(\alpha_1\arr\dots\arr\alpha_k\arr\otyp)=\max(\set{0}\cup\setof{\ord(\alpha_i)+1}{i\in\scope{k}})$;
	for example $\ord(\otyp)=0$, $\ord(\otyp\arr\otyp\arr\otyp)=1$, and $\ord((\otyp\arr\otyp)\arr\otyp)=2$.

	Having a set of typed nonterminals $\Xx$, a set of typed variables $\Yy$, and a set of symbols $\Sigma$,
	\emph{terms} over $(\Xx,\Yy,\Sigma)$ are defined by induction:
	\begin{itemize}
	\item	nonterminal: every nonterminal $X\in\Xx$ of type $\alpha$ is a term of type $\alpha$;
	\item	variable: every variable $y\in\Yy$ of type $\alpha$ is a term of type $\alpha$;
	\item	node constructor: if $K_1,\dots,K_k$ are terms of type $\otyp$ and $a\in\Sigma$, then $\symb{a,K_1,\dots,K_k}$ is a term of type $\otyp$;
	\item	application: if $K$ is a term of type $\alpha\arr\beta$, and $L$ is a term of type $\alpha$, then $K\,L$ is a term of type $\beta$.
	\end{itemize}
	The type of a term $K$ is denoted $\tp(K)$.
	The order of a term $K$, written $\ord(K)$, is defined as the order of its type.

	A \emph{(higher-order) recursion scheme} is a tuple $\Gg=(\Xx,X_0,\Sigma,\Rr)$, where $\Xx$ is a finite set of typed nonterminals,
	and $X_0\in\Xx$ is a \emph{starting nonterminal} of type $\otyp$,
	and $\Sigma$ is a finite set of symbols (called an \emph{alphabet}),
	and $\Rr$ is a function assigning to every nonterminal $X\in\Xx$ a \emph{rule} of the form $X\,y_1\,\dots\,y_k\to R$,
	where $\tp(X)=(\tp(y_1)\arr\dots\arr\tp(y_k)\arr\otyp)$, and $R$ is a term of type $\otyp$ over $(\Xx,\set{y_1,\dots,y_k},\Sigma)$.
	The order of a recursion scheme, $\ord(\Gg)$, is defined as the maximum of orders of its nonterminals.

	Having a recursion scheme $\Gg=(\Xx,X_0,\Sigma,\Rr)$,
	for every set of variables $\Yy$ we define a \emph{reduction relation} $\rew_\Gg$ between terms over $(\Xx,\Yy,\Sigma)$ as the least relation such that
	\begin{itemize}
	\item	$X\,K_1\,\dots\,K_k\rew_\Gg R\subst{K_1/y_1,\dots,K_k/y_k}$ if the rule for $X$ is $X\,y_1\,\dots\,y_k\to R$,
		where $R\subst{K_1/y_1,\allowbreak\dots,K_k/y_k}$ denotes the term obtained from $R$ by substituting $K_i$ for $y_i$ for all $i\in\scope{k}$.
	\end{itemize}

	A (potentially infinite) \emph{tree} over an alphabet $\Sigma$ is defined by coinduction: every tree over $\Sigma$ is of the form $\symb{a,T_1,\dots,T_k}$,
	where $a\in\Sigma$ and $T_1,\dots,T_k$ are again trees over $\Sigma$ (for an introduction to coinductive definitions and proofs see, e.g., Czajka~\cite{Czajka}).
	We employ the usual notions of nodes, children, branches, etc.
	Formally, we can define nodes as sequences of natural numbers;
	then for a tree $T=\symb{a,T_1,\dots,T_k}$, the empty sequence $()$ is a node of $T$ labeled by $a$,
	and any longer sequence $(i_1,i_2,\dots,i_n)$ is a node of $T$ labeled by $b$ if $i_1\in\scope{k}$ and $(i_2,\dots,i_n)$ is a node of $T_{i_1}$ labeled by $b$.
	For a tree $T$ and its node $v$, we write $T\subtree_v$ for the subtree of $T$ starting at $v$.

	Again by coinduction, we define the tree \emph{generated} by a recursion scheme $\Gg=(\Xx,X_0,\Sigma,\allowbreak\Rr)$
	from a term $M$ of type $\otyp$ (over $(\Xx,\emptyset,\Sigma)$), denoted $\BT_\Gg(M)$:
	\begin{itemize}
	\item	if $M\rew_\Gg^*\symb{a,K_1,\dots,K_k}$, then $\BT_\Gg(M)=\symb{a,\BT_\Gg(K_1),\dots,\BT_\Gg(K_k)}$;
	\item	otherwise, $\BT_\Gg(M)=\symb{\omega}$ for a special symbol $\omega\not\in\Sigma$.
	\end{itemize}
	The tree generated by $\Gg$ (without mentioning a term), denoted $\BT(\Gg)$, is defined as $\BT_\Gg(X_0)$.

    \subparagraph{Parity games.}

	As already said, in the model-checking problem we are given a recursion scheme $\Gg$ and an alternating parity automaton $\Aa$,
	and we are asked whether the tree $T_\Gg$ generated by $\Gg$ is accepted by $\Aa$.
	One can, however, create a product of $\Gg$ and $\Aa$, which is a recursion scheme $\Gg_\Aa$ generating the tree of all possible runs of $\Aa$ on $T_\Gg$.
	This tree is a parity game;
	the game is won by Eve if and only if $\Aa$ accepts $T_\Gg$ (see \cref{app:product} for more details).
	Due to this reduction, it is enough to work with recursion schemes generating parity games, and consider the problem of finding a winner in such games.

	For every $d\in\Nat_+$ we consider the alphabet $\Sigma_d=\set{\Adam,\Eve}\times\scope{d}$.
	A \emph{parity tree} is a tree over $\Sigma_d$ where every node has at least one child.
	A \emph{parity recursion scheme} is a recursion scheme generating a parity tree (in particular the generated tree cannot have nodes without children, including $\omega$-labeled nodes).
	For a node labeled by $(\player,p)\in\Sigma_d$, we say that it \emph{belongs} to the player $\player$, and that it has \emph{priority} $p$.
	For trees and terms we write $\symb{\player,p,K_1,\dots,K_k}$ instead of $\symb{(\player,p),K_1,\dots,K_k}$, avoiding excessive brackets.

	A branch $\xi$ in a parity tree $T$ is \emph{won by Eve} (\emph{Adam}) if the greatest priority appearing infinitely often on $\xi$ is even (odd, respectively).
	A \emph{strategy} $\rho$ of a player $\player\in\set{\Adam,\Eve}$ in a parity tree $T$ is a function that assigns numbers to nodes of $T$ belonging to the player $\player$;
	if a node $v$ has $k$ children, we require that $\rho(v)\in\scope{k}$.
	A branch $\xi$ \emph{agrees} with $\rho$ if for every node $v$ on $\xi$ that belongs to $\player$, the next node of $\xi$ is the $\rho(v)$-th child of $v$.
	A strategy $\rho$ of $\player$ is \emph{winning} if all branches that agree with $\rho$ are winning for $\player$.
	Finally, $\player$ \emph{wins} in $T$ if $\player$ has a winning strategy in $T$; otherwise $\player$ \emph{loses} in $T$.
	It is a standard result that in every parity tree $T$ exactly one of the players wins.
	
	It is useful to consider the following order $\preceq$ on positive natural numbers (priorities): $\dots\preceq 5\preceq 3\preceq 1\preceq 2\preceq 4\preceq 6\preceq\dots$
	(first we have odd numbers in the reversed order, and then positive even numbers).
	We use the words \emph{worse} and \emph{better} to say that a priority is, respectively, earlier or later in this order.
	The intuition is that while playing a parity game, Eve always prefers to see better priorities.

\section{Transformation}\label{sec:transformation}

	In this section we present a transformation, called \emph{order-reducing transformation}, resulting in the main theorem of this paper:

	\begin{theorem}\label{thm:main}
		For any $n\geq 1$, there exists a transformation from order-$n$ parity recursion schemes to order-$(n-1)$ parity recursion schemes,
		and a polynomial $p_n$ such that, for any order-$n$ parity recursion scheme $\Gg$,
		the winner in the tree generated by the resulting recursion scheme $\Gg^\dag$ is the same as in the tree generated by $\Gg$, and $|\Gg^\dag|\leq 2^{p_n(|\Gg|)}$.
	\end{theorem}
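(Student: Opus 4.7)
The plan is to build $\Gg^\dag$ by a form of defunctionalization: although the order-$(n-1)$ arguments passed inside $\Gg$ can a priori be arbitrary terms, only finitely many \emph{signatures} of such arguments are relevant for determining the winner of the parity game on $\BT(\Gg)$. Replacing every order-$(n-1)$ argument by a ground-type object encoding its signature decreases the order by one, and the main work is to design the signatures in a way that tracks priorities accurately enough.

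First I would formalize signatures. For each type $\alpha$ with $\ord(\alpha)\leq n-1$ I would define a finite set $\Ff_\alpha$ that summarizes, for a closed term of type $\alpha$, the priority-aware behavior that matters for parity acceptance. Ground type carries a trivial signature; for $\alpha_1\arr\dots\arr\alpha_k\arr\otyp$ a signature records, for each tuple $(f_1,\dots,f_k)\in\Ff_{\alpha_1}\times\dots\times\Ff_{\alpha_k}$ of possible argument signatures, the $\preceq$-worst priority that Eve is forced to traverse on any play reaching the ``return value'' of such an application. The cardinality of $\Ff_\alpha$ is exponential in the arity of $\alpha$ and in the number of priorities $d$, which for fixed $n$ suffices for the bound $|\Gg^\dag|\leq 2^{p_n(|\Gg|)}$.

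The transformation itself replicates each nonterminal $X$ of $\Gg$ into copies $X^{f_1,\dots,f_k}$ indexed by signatures for the order-$(n-1)$ formal parameters, while the lower-order parameters survive unchanged. The right-hand side of the rule for $X^{f_1,\dots,f_k}$ is obtained from the original right-hand side by replacing each order-$(n-1)$ variable $y_i$ by its symbolic signature $f_i$, by simulating every application of such an $f_i$ using the information stored in the signature, and crucially by inserting extra node constructors carrying the priorities that the signatures ``promise''. In this way the information that Eve used to observe \emph{inside} an order-$(n-1)$ argument becomes visible on a branch of the output tree of $\Gg^\dag$, which is what allows the parity winner to be preserved. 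Since all order-$(n-1)$ argument positions now carry ground-type data, every nonterminal of $\Gg^\dag$ has order at most $n-1$.

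The main obstacle will be the correctness proof, i.e., that Eve wins in $\BT(\Gg)$ iff she wins in $\BT(\Gg^\dag)$. The direction from $\Gg^\dag$ to $\Gg$ is routine: a winning strategy in $\Gg^\dag$ is lifted to one in $\Gg$ by expanding each signature into the order-$(n-1)$ term it abstracts. The harder direction takes a winning strategy $\rho$ in $\Gg$ and must coherently assign signatures to every order-$(n-1)$ term that $\rho$ causes to occur in an argument position. I would build this assignment coinductively, computing for each such term the $\preceq$-best priority that $\rho$ guarantees infinitely often, and the worst priority seen before the next return. This is exactly where the reachability-style argument of~\cite{trans-nonempty} needs to be genuinely strengthened: for reachability it suffices to track whether some event occurs once, whereas for parity the signatures must faithfully reflect a $\preceq$-limit along infinite plays, using the order on priorities introduced in \cref{sec:prelim}. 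Once a consistent signature assignment is produced, each branch of the $\Gg^\dag$-tree can be matched with a branch of $\BT(\Gg)$ whose sequence of priorities has the same $\preceq$-limit, and the winner is preserved on both sides.
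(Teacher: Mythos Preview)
Your proposal attacks the problem from the opposite end to the paper, and the complexity claim does not survive scrutiny.

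\textbf{Where the approaches diverge.} You propose to eliminate the \emph{highest}-order arguments: replace every argument of order $n-1$ by a ground-type ``signature'' and replicate nonterminals over all possible signatures. The paper does the reverse: it eliminates the \emph{trailing order-$0$} arguments. For a type written as $\alpha_1\arr\dots\arr\alpha_k\Arr\otyp^\ell\arr\otyp$, the transformation drops the $\otyp^\ell$ suffix and instead creates $|D_d|^\ell$ copies of the term, one per tuple of Eve's \emph{declarations} (a declaration being a single priority $r\in\set{1,\dots,d,2d}$ bounding the leader seen before reaching that argument). The order drops because this is applied recursively inside every $\alpha_i$. Correctness is then argued via a game gadget: Eve announces $r$, Adam either checks her claim inside the context or jumps straight to the argument after reading priority $r$.

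\textbf{The complexity gap.} Your signatures are defined recursively: $\Ff_{\alpha_1\arr\dots\arr\alpha_k\arr\otyp}$ is a set of functions $\Ff_{\alpha_1}\times\dots\times\Ff_{\alpha_k}\to\scope{d}$. This makes $|\Ff_\alpha|$ a tower of exponentials of height $\ord(\alpha)$, not a single exponential in the arity. For instance, with $\alpha_0=\otyp$ and $\alpha_{m+1}=\alpha_m\arr\otyp$, one gets $|\Ff_{\alpha_m}|$ of the form $d^{d^{\iddots}}$ with $m$ levels, although $|\alpha_m|=O(m)$. Hence the number of copies of a nonterminal---and therefore $|\Gg^\dag|$---is already $(n-1)$-fold exponential in $|\Gg|$ after a \emph{single} step, violating the bound $|\Gg^\dag|\leq 2^{p_n(|\Gg|)}$. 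By contrast, the paper's per-argument bookkeeping is just one element of $D_d$, so each argument is multiplied at most $(d+1)^{A_\Gg}$ times; this is what buys the single exponential per step (\cref{trans-complexity}).

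\textbf{A secondary issue.} Even setting complexity aside, the proposal leaves unspecified how an order-$(n-1)$ \emph{term} occurring in argument position (not merely a variable) is assigned a signature inside a rule of $\Gg^\dag$. Computing that signature is itself a model-checking-style fixpoint on the lower-order structure; without saying how $\Gg^\dag$ performs or encodes that computation, the construction is incomplete. The paper avoids this entirely: because only order-$0$ arguments are dropped, every such argument is already a ground-type term and the only ``summary'' needed is the declared priority, which the game gadget extracts on the fly.
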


    \subparagraph{Intuitions.}

	Let us first present intuitions behind our transformation.
	While reducing the order, we have to replace, in particular, order-$1$ functions by order-$0$ terms.
	Consider for example a tree $T$ generated from a term $K\,L$ of type $\otyp$, where $K$ has type $\otyp\arr\otyp$.
	Essentially, $T$ consists of a context $C_K$, generated by $K$, where the tree $T_L$ generated by $L$ is inserted in some ``holes''.
	Instead of playing in $T$, we propose the following modification of the game.
	At the beginning, we ask Eve a question: how is she going to reach subtrees $T_L$ while playing in $T$?
	She may declare that, according to her winning strategy,
	\begin{itemize}
	\item	she is able to ensure that the greatest priority seen before reaching $T_L$ will not be worse than $r$, for some number $r$ of her choice, or
	\item	she will not reach subtrees $T_L$ at all, which amounts to choosing for $r$ an even number greater than $d$, say $r=2d$.
	\end{itemize}
	Then, we ask Adam if he believes in this declaration.
	If so, we simply read the declared worst-case priority $r$, and we continue playing in $T_L$ (this possibility is unavailable for Adam, if Eve declared that she will not visit $T_L$).
	Otherwise, we check the declaration: we start playing in $C_K$;
	while reaching a place where $T_L$ should be placed, Eve immediately wins (loses) if her declaration is fulfilled (not fulfilled, respectively).

	We can see that such a modification of the game (even applied in infinitely many places of the considered tree) does not change the winner.
	A subtle point is that, in the modified game, Eve has to make a declaration on the priority $r$ before actually starting the game in the tree generated from $K\,L$,
	and it is not completely obvious why the need for the declaration introduces no disadvantage for Eve.
	Nevertheless, for a fixed Eve's winning strategy, the worst greatest priority seen before reaching $T_L$ is fixed, so that Eve can declare it as $r$.

	In the transformation, we change the order-$1$ term $K$ into several order-$0$ terms: $K_r$ for $r\in\set{1,\dots,d,2d}$
	(where $d$ is a bound on priorities in the considered parity recursion scheme $\Gg$).
	These terms generate trees of the same shape as the context $C_K$ generated by $K$ but with some fixed trees substituted in place of the holes of $C_K$
	(where originally trees generated by the argument $L$ were attached).
	The generated trees correspond to particular declarations made by Eve, as described above.
	Namely, we consider some fixed trees $\bot$ and $\top$ in which Eve loses and wins, respectively.
	Then, in the tree generated by $K_r$, the tree $\top$ is placed in holes such that the greatest priority on the path from the root to the hole is not worse than $r$,
	and the tree $\bot$ is placed in the remaining holes.
	In particular, the tree $\bot$ is placed in all holes of the tree generated by $K_{2d}$, because all priorities actually appearing in the tree are worse than $2d$.
	Finally, we replace $K\,L$ by $\symb{\Eve,1,K_1^L,K_2^L,\dots,K_d^L,K_{2d}}$,
	where $K_r^L=\symb{\Adam,1,K_r,\allowbreak\symb{\Eve,\allowbreak r,\allowbreak L}}$.
	In this way we realize the modified game described above: first Eve chooses a declaration $r$ and then Adam either proceed to $K_r$ or to $L$ after seeing priority $r$
	(the latter possibility is disabled for $r=2d$).
	The priority $1$ of the newly created tree nodes should be seen as a neutral priority; higher priorities visited later will be more important anyway.

	When a term $K$ of order $1$ takes multiple arguments (instead of one argument $L$), we proceed in the same way, allowing Eve to make declarations for each of the arguments.

	While applying the above-described transformation to recursion schemes,
	it is possible that the term $K$ considered above contains some nonterminals or variables.
	Then, in order to realize the transformation, we need to create multiple copies of these nonterminals and variables, corresponding to particular declarations of Eve.
	
	For example, say that in a recursion scheme we have (among others) the following two rules:
	\begin{align*}
		&\varX\to\varY\,\varZ,\\
		&\varY\,\varz\to\symb{\Eve,1,\varz,\symb{\Eve,2,\varz}}.
	\end{align*}
	Here $\varX$ and $\varZ$ are of type $\otyp$, and $\varY$ is of type $\otyp\arr\otyp$, so $\varY\,\varZ$ is an application that should be replaced by the transformation.
	Assuming $d=2$, we should obtain the following rules:
	\begin{align*}
		&\varX'\to\symb{\Eve,1,\symb{\Adam,1,\varY_1,\symb{\Eve,1,\varZ'}},
			\symb{\Adam,1,\varY_2,\symb{\Eve,2,\varZ'}},
			\varY_4},\\
		&\varY_1\to\symb{\Eve,1,\downVdash,\symb{\Eve,2,\downVdash}},\\
		&\varY_2\to\symb{\Eve,1,\upVdash,\symb{\Eve,2,\downVdash}},\\
		&\varY_4\to\symb{\Eve,1,\upVdash,\symb{\Eve,2,\upVdash}},
	\end{align*}
	where $\upVdash$ and $\downVdash$ are nonterminals from which the trees $\bot$ and $\top$ (in which Eve loses and wins, respectively) are generated.
	
	Another possibility is that in the original recursion scheme we have $\vary\,\varZ$ instead of $\varY\,\varZ$:
	\begin{align*}
		&\varS\to\varT\,\varY,\\
		&\varT\,\vary\to\vary\,\varZ.
	\end{align*}
	Then, the single parameter $\vary$ gets transformed into three parameters:
	\begin{align*}
		&\varS'\to\varT'\,\varY_1\,\varY_2\,\varY_4,\\
		&\varT'\,\vary_1\,\vary_2\,\vary_4\to\symb{\Eve,1,\symb{\Adam,1,\vary_1,\symb{\Eve,1,\varZ'}},
			\symb{\Adam,1,\vary_2,\symb{\Eve,2,\varZ'}},
			\vary_4}.
	\end{align*}

    \subparagraph{Formal definition.}

	We now formalize the above intuitions.
	Fix a parity recursion scheme $\Gg=(\Xx,X_0,\Sigma_d,\Rr)$;
	in particular fix a bound $d$ on priorities appearing in $\Gg$.
	
	A set $D_d$ of Eve's \emph{declarations} is defined as $D_d=\set{1,\dots,d,2d}$.
	For a priority $p\in\scope{d}$ and a declaration $r\in D_d$ we define a \emph{shifted} declaration $r\restr_p$ (obtained from $r$ after seeing priority $p$):
	\begin{align*}
		r\restr_p=\left\{\begin{array}{ll}
			p+1&\mbox{if $p$ is odd and $p>r$,}\\
			p-1&\mbox{if $p$ is even and $p\geq r$,}\\
			r&\mbox{otherwise.}
		\end{array}\right.
	\end{align*}
	We remark that the same definition appears in Tsukada and Ong~\cite{shift-was-here} (where shifts are called left-residuals);
	a slightly different representation is present also in Salvati and Walukiewicz~\cite{Krivine} (with declarations called residuals and shifts called liftings).
	
	The \emph{leader} (``most important priority'') of a sequence of priorities $\pi$ is the greatest priority appearing in $\pi$, or $1$ if $\pi$ is empty.
	A sequence of priorities $\pi$ \emph{fulfils} a declaration $r\in D_d$ if $r$ is worse or equal than the leader of $\pi$
	(where ``worse'' refers to the $\preceq$ order defined in \cref{sec:prelim}).
	For example, $1,4,2$, and $1,1,1$, both fulfil $3$, but $1,5,4$ does not.
	The empty sequence fulfils $r$ exactly when $r$ is odd.
	No sequence of priorities from $\scope{d}$ fulfils $2d$.
	The following \lcnamecref{shift2fulfilled} is obtained by a direct analysis (see \cref{app:shift2fulfilled}):
	
        \begin{restatable}{lemma}{shiftfulfilled}\label{shift2fulfilled}
		A sequence of priorities $p_1,p_2,\dots,p_k\in\scope{d}$ fulfils a declaration $r\in D_d$ if and only if $p_2,\dots,p_k$ fulfils $r\restr_{p_1}$.
	\end{restatable}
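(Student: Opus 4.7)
The plan is to reduce the claim to a statement purely about the two numbers $p_1$ and $\ell'$, where $\ell'$ is the leader of the suffix $p_2,\dots,p_k$, and then dispatch it by a finite case analysis. Since the leader of $p_1,p_2,\dots,p_k$ is just $\max(p_1,\ell')$ in the usual integer order, the lemma is equivalent to the purely numeric assertion
\[
r\preceq\max(p_1,\ell')\quad\Longleftrightarrow\quad r\restr_{p_1}\preceq\ell'.
\]
I will split according to the three clauses in the definition of $r\restr_{p_1}$, using throughout the basic characterization of $\preceq$: for $a,b$ both odd, $a\preceq b$ iff $a\geq b$; for both even, iff $a\leq b$; odd is always $\preceq$ even; even is never $\preceq$ odd.

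In the first clause ($p_1$ odd, $p_1>r$), $r\restr_{p_1}=p_1+1$ is even, so the right-hand side holds precisely when $\ell'$ is even and $\ell'\geq p_1+1$. For the left-hand side, $p_1$ (odd, strictly worse in $\preceq$ than $r$) cannot by itself satisfy $r\preceq p_1$, regardless of whether $r$ is odd or even; so $r\preceq\max(p_1,\ell')$ can only come from $\ell'>p_1$ together with $r\preceq\ell'$, and a short check (odd $r$ needs $\ell'$ even, even $r$ needs $\ell'$ even and $\ell'\geq r$, which here follows from $\ell'>p_1>r$) shows this is exactly the same condition. In the second clause ($p_1$ even, $p_1\geq r$), $r\restr_{p_1}=p_1-1$ is odd; I check that both sides are equivalent to the negation of ``$\ell'$ is odd and $\ell'>p_1$'', again distinguishing the parity of $r$. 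In the third clause, the constraints on $p_1$ force $p_1$ to be ``dominated'' by $r$ in the sense relevant here: when $\ell'\geq p_1$ the maximum is $\ell'$ and the equivalence is trivial, and when $\ell'<p_1$ one verifies directly that $r\preceq p_1\Longleftrightarrow r\preceq\ell'$ (in each parity sub-case the two sides are both false). The degenerate declaration $r=2d$ always falls into the third clause and both sides of the equivalence are always false, since every priority from $\scope{d}$ is strictly less than $2d$ and hence $\prec 2d$.

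The only mildly annoying part is bookkeeping: the shift definition does not branch on the parity of $r$, but the $\preceq$-order does, so each of the three cases needs to be verified with $r$ odd and $r$ even separately. I expect this to be the main obstacle, though it is purely mechanical; no induction or structural argument is required, and the full sequence $p_2,\dots,p_k$ is abstracted away through $\ell'$ at the very first step.
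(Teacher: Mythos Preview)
Your reduction to the numeric equivalence $r\preceq\max(p_1,\ell')\Leftrightarrow r\restr_{p_1}\preceq\ell'$ and the three-way split on the clauses of $r\restr_{p_1}$ is exactly the paper's own argument (with your $\ell'$ being their $s$). One small slip: in the third clause with $\ell'<p_1$, the two sides are not always both false---when $r$ is odd they are both \emph{true} (e.g.\ $p_1$ odd with $p_1\leq r$ gives $r\preceq p_1$, and $r>\ell'$ gives $r\preceq\ell'$)---but the equivalence you assert still holds, so the outline goes through.
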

	
	Having a type, we are interested in cutting off its suffix of order $1$.
	Thus, we use the notation $\alpha_1\arr\dots\arr\alpha_k\Arr\otyp^\ell\arr\otyp$ for a type $\alpha_1\arr\dots\arr\alpha_k\arr\otyp^\ell\arr\otyp$ such that either $k=0$ or $\alpha_k\neq\otyp$.
	Notice that every type $\alpha$ can be uniquely represented in this form.
	We remark that some among the types $\alpha_1,\dots,\alpha_{k-1}$ (but not $\alpha_k$) may be $\otyp$.
	For a type $\alpha$ we write $\gar(\alpha)$ (``ground arity'') for the number $\ell$ for which we can write $\alpha=(\alpha_1\arr\dots\arr\alpha_k\Arr\otyp^\ell\arr\otyp)$;
	we also extend this to terms: $\gar(M)=\gar(\tp(M))$.

	We transform terms of type $\alpha$ to terms of type $\alpha^{\dag_d}$, which is defined by induction:
	\begin{align*}
		(\alpha_1\arr\dots\arr\alpha_k\Arr\otyp^\ell\arr\otyp)^{\dag_d} = \left((\alpha_1^{\dag_d})^{|D_d|^{\gar(\alpha_1)}}\arr\dots\arr(\alpha_k^{\dag_d})^{|D_d|^{\gar(\alpha_k)}}\arr\otyp\right).
	\end{align*}
	Thus, we remove all trailing order-$0$ arguments, and we multiplicate (and recursively transform) remaining arguments.
	The number of copies depends on the bound $d$ on priorities appearing in the considered parity recursion scheme.

	For a finite set $S$, we write $D_d^S$ for the set of functions $A\colon S\to D_d$.
	Moreover, we assume some fixed order on functions in $D_d^S$,
	and we write $P\,(Q_A)_{A\in D_d^S}$ for an application $P\,Q_{A_1}\,\dots\,Q_{A_{|D_d|^{|S|}}}$, where $A_1,\dots,A_{|D_d|^{|S|}}$ are all the functions from $D_d^S$ listed in the fixed order.
	The only function in $D_d^\emptyset$ is denoted $\emptyset$.

	For every variable $y$ and for every function $A\in D_d^{\scope{\gar(y)}}$ we consider a variable $y_A^{\dag_d}$ of type $(\tp(y))^{\dag_d}$.
	Likewise, for every nonterminal $X$ of $\Gg$ and for every function $A\in D_d^{\scope{\gar(X)}}$ we consider a nonterminal $X_A^{\dag_d}$ of type $(\tp(X))^{\dag_d}$.
	As the new set of nonterminals we take $\Xx^{\dag_d}=\setof{X_A^{\dag_d}}{X\in\Xx,A\in D_d^{\scope{\gar(X)}}}\cup\set{\upVdash,\downVdash}$.

	We now define a function $\tr_d$ transforming terms.
	Its value $\tr_d(A,Z,M)$ is defined when $M$ is a term over $(\Xx,\Yy,\Sigma_d)$ for some set of variables $\Yy$, and $A\in D_d^{\scope{\gar(M)}}$,
	and $Z\colon\Yy\rightharpoonup D_d$ is a partial function such that $\dom(Z)$ contains only variables of type $\otyp$.
	The intention is that $A$ specifies Eve's declarations for trailing order-$0$ arguments, and $Z$ specifies them for order-$0$ variables (among those in $\dom(Z)$).
	The transformation is defined by induction on the structure of $M$, as follows:
	\begin{bracketenumerate}
	\item	$\tr_d(A,Z,X)=X_A^{\dag_d}$ for $X\in\Xx$;
	\item	$\tr_d(A,Z,y)=y_A^{\dag_d}$ for $y\in\Yy\setminus\dom(Z)$;
	\item\label[case-br]{tr:case:3}
		$\tr_d(\emptyset,Z,z)=\downVdash$ if $Z(z)$ is odd;
	\item\label[case-br]{tr:case:4}
		$\tr_d(\emptyset,Z,z)=\upVdash$ if $Z(z)$ is even;
	\item\label[case-br]{tr:case:5}
		$\tr_d(\emptyset,Z,\symb{\player,p,K_1,\dots,K_k})=\symb{\player,p,\tr_d(\emptyset,Z\restr_p,K_1),\dots,\tr_d(\emptyset,Z\restr_p,K_k)}$,
		where $Z\restr_p$ is the function defined by $Z\restr_p(z)=(Z(z))\restr_p$ for all $z\in\dom(Z)$;
	\item\label[case-br]{tr:case:6}
		$\tr_d(A,Z,K\,L)=\symb{\Eve,1,K_1^L,K_2^L,\dots,K_d^L,K_{2d}}$ if $\tp(K)=(\otyp^{\ell+1}\arr\otyp)$,
		where $K_r^L=\symb{\Adam,1,K_r,\symb{\Eve,r,\allowbreak\tr_d(\emptyset,Z\restr_r,L)}}$ for $r\in\scope{d}$
		and $K_r=\tr_d(A\mapch{\ell+1\mapsto r},Z,K)$ for $r\in D_d$;
	\item	$\tr_d(A,Z,K\,L)=(\tr_d(A,Z,K))\,(\tr_d(B,Z,L))_{B\in D_d^{\scope{\gar(L)}}}$ if $\tp(K)=(\alpha_1\arr\dots\arr\alpha_k\Arr\otyp^\ell\arr\otyp)$ with $k\geq 1$.
	\end{bracketenumerate}
	In \cref{tr:case:3,,tr:case:4,tr:case:5} the term is of type $\otyp$, so the ``$A$'' argument is necessarily $\emptyset$ (a function with an empty domain).

	For every rule $X\,y_1\,\dots\,y_k\,z_1\,\dots\,z_\ell\to R$ in $\Rr$, where $\ell=\gar(X)$,
	and for every function $A\in D_d^{\scope{\ell}}$,
	to $\Rr^{\dag_d}$ we take the rule
	\begin{align*}
		X_A^{\dag_d}\,(y_{1,B}^{\dag_d})_{B\in D_d^{\scope{\gar(y_1)}}}\,\dots\,(y_{k,B}^{\dag_d})_{B\in D_d^{\scope{\gar(y_k)}}}\to\tr_d(\emptyset,\mapch{z_i\mapsto A(\ell+1-i)\mid i\in\scope{\ell}},R).
	\end{align*}
	In the function $A$ it is more convenient to count arguments from right to left (then we do not need to shift the domain in \cref{tr:case:6} above),
	but it is more natural to have variables $z_1,\dots,z_\ell$ numbered from left to right;
	this is why in the rule for $X_A^{\dag_d}$ we assign to $z_i$ the value $A(\ell+1-i)$, not $A(i)$.
	Additionally, in $\Rr^{\dag_d}$ we have rules $\upVdash\to\symb{\Eve,1,\upVdash}$ and $\downVdash\to\symb{\Eve,2,\downVdash}$.
	Then Eve loses (wins) in the tree $\bot$ ($\top$) generated by $\Gg^\dag$ from $\upVdash$ ($\downVdash$, respectively).

	Finally, the resulting recursion scheme $\Gg^\dag$ is $(\Xx^{\dag_d},X_{0,\emptyset}^{\dag_d},\Sigma_d,\Rr^{\dag_d})$.
	This finishes the definition of the transformation.
	In the next \lcnamecref{sec:complexity} we analyze its complexity, and in \cref{sec:correctness} we justify its correctness.

	\begin{remark}
		Let us briefly compare our transformation with a transformation by Broadbent et al.\@ \cite{collapsible-arxiv} reducing the order of a collapsible pushdown automaton by one
		while preserving the winner of the generated parity game.
		Although their transformation seems technically more complicated, its overall idea is quite similar to what we do in this paper.
		Their transformation is split into three independent steps.
		First, they make the automaton ``rank-aware'', which means that it knows what was the highest priority visited between creation of a collapse link and its usage.
		This corresponds to adding the parameters $A$ and $Z$ to our transformation, so that we know whether a declaration is fulfilled when a variable $z$ is used.
		Second, they eliminate collapse links of order $n$, which in our case corresponds to removing trailing arguments of order $0$
		and introducing the gadget asking Eve for a declaration.
		Third, they reduce the order of the automaton by one, which we also do for recursion schemes.
	\end{remark}

\section{Complexity}\label{sec:complexity}

	In this section we analyze complexity of our transformation.
	First, we formally define the \emph{size} of a recursion scheme.
	The size of a term is defined by induction on its structure:
	\begin{gather*}
		|X|=|y|=1,\qquad
		|K\,L|=1+|K|+|L|,\\
		|\symb{a,K_1,\dots,K_k}|=1+|K_1|+\dots+|K_k|.
	\end{gather*}
	Then $|\Gg|$, the size of $\Gg$, is defined as the sum of $|R|+k$ over all rules $X\,y_1\,\dots\,y_k\to R$ of $\Gg$.
	In Asada and Kobayashi~\cite{word2tree} such a size is called \emph{Curry-style} size;
	it does not include sizes of types of employed variables.

	We say that a type $\alpha$ \emph{appears in the definition} of a type $\beta$ if either $\alpha=\beta$,
	or $\beta=(\beta_1\arr\beta_2)$ and $\alpha$ appears in the definition of $\beta_1$ or of $\beta_2$.
	We write $A_\Gg$ for the largest arity of types appearing in the definition of types of nonterminals in a recursion scheme $\Gg$.
	Notice that types of other objects used in $\Gg$, namely variables and subterms of right-hand sides of rules, appear in the definition of types of nonterminals,
	hence their arity is also bounded by $A_\Gg$.
	It is reasonable to consider large recursion schemes, consisting of many rules, where simultaneously the maximal arity $A_\Gg$ is respectively small.

	While the exponential bound mentioned in \cref{thm:main} is obtained by applying the order-reducing transformation to an arbitrary parity recursion scheme,
	the complexity becomes slightly better if we first apply a preprocessing step.
	This is in particular necessary, if we want to obtain linear dependence in the size of $\Gg$ (and exponential only in the maximal arity $A_\Gg$).
	The preprocessing, making sure that the recursion scheme is in a \emph{simple form} (defined below), amounts to splitting large rules into multiple smaller rules.
	A similar preprocessing is present already in prior work~\cite{Kobayashi-jacm,word2tree,diagonal-arxiv,trans-nonempty}.

	An \emph{application depth} of a term $R$ is defined as the maximal number of applications on a single branch in $R$,
	where a compound application $K\,L_1\,\dots\,L_k$ counts only once.
	More formally, we define by induction:
	\begin{align*}
		&\mathsf{ad}(\symb{a,K_1,\dots,K_k})=\max\set{\mathsf{ad}(K_i)\mid i\in\scope{k}},\\
		&\mathsf{ad}(X\,K_1\,\dots\,K_k)=\mathsf{ad}(y\,K_1\,\dots\,K_k)=\max(\set{0}\cup\set{\mathsf{ad}(K_i)+1\mid i\in\scope{k}}).
	\end{align*}
	We say that a recursion scheme $\Gg$ is in a \emph{simple form} if the right-hand side of each its rule has application depth at most $2$.
	We have the following:

	\begin{lemma}[{\cite[Lemma~4.1]{trans-nonempty}}]\label{simpl-complexity}
		For every recursion scheme $\Gg$ there exists a recursion scheme $\Gg'$ being in a simple form, generating the same tree as $\Gg$, and such that
		$\ord(\Gg')=\ord(\Gg)$, and $A_{\Gg'}\leq 2A_\Gg$, and $|\Gg'|=\Oo(A_\Gg\cdot|\Gg|)$.
		The recursion scheme $\Gg'$ can be created in time linear in its size.
	\end{lemma}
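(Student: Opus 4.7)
The plan is to obtain $\Gg'$ by iteratively extracting ``too-deep'' subterms of each rule of $\Gg$ into fresh auxiliary nonterminals. Concretely, for a rule $X\,y_1\,\dots\,y_k\to R$ with $\mathsf{ad}(R)>2$, I would locate an application subterm $K\,L_1\,\dots\,L_m$ occurring inside $R$ and an index $i$ with $\mathsf{ad}(L_i)\geq 2$; then I would replace that occurrence of $L_i$ by $F\,z_1\,\dots\,z_j$, where $F$ is a fresh nonterminal and $z_1,\dots,z_j$ are those among $y_1,\dots,y_k$ that occur free in $L_i$, and I would add the new rule $F\,z_1\,\dots\,z_j\to L_i$. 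Because the new rule for $F$ is strictly smaller than the original rule for $X$, iterating this procedure (both inside the modified rule for $X$ and recursively inside each freshly introduced rule) terminates with every rule in simple form.

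For the correctness claim $\BT(\Gg')=\BT(\Gg)$, I would argue by simulation: a reduction step $F\,N_1\,\dots\,N_j\rew_{\Gg'}L_i\subst{N_1/z_1,\dots,N_j/z_j}$ in $\Gg'$ can be absorbed into the preceding top-level unfolding that produced this $F$-occurrence, recovering a reduction step of $\Gg$ that yields the very same term. Conversely, every $\Gg$-reduction is simulated in $\Gg'$ by one top-level unfolding followed by the necessary $F$-unfoldings. Standard confluence arguments for recursion schemes, combined with a coinductive comparison of the two generated trees, then give equality of the Böhm trees.

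For the quantitative bounds, the type of each fresh $F$ is $\tp(z_1)\arr\dots\arr\tp(z_j)\arr\tp(L_i)$; since $\tp(z_1),\dots,\tp(z_j)$ are types of original variables and $\tp(L_i)$ is the type of a subterm from an original right-hand side, all constituent types already appear in the definition of types of $\Gg$, so $\ord(\Gg')=\ord(\Gg)$. The arity of $\tp(F)$ equals $j$ plus the arity of $\tp(L_i)$; since $j$ is bounded by the arity of $\tp(X)$, hence by $A_\Gg$, and $\tp(L_i)$ has arity at most $A_\Gg$ as well, we obtain $A_{\Gg'}\leq 2A_\Gg$. For the size, each extraction moves $L_i$ into a new rule without duplicating any of its nodes and spends only $\Oo(A_\Gg)$ additional symbols on the new application $F\,z_1\,\dots\,z_j$ and on the matching rule header. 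Since the number of extractions is bounded by the number of application subterms of $\Gg$, hence by $|\Gg|$, the total size of $\Gg'$ is $\Oo(A_\Gg\cdot|\Gg|)$, and the whole construction runs in time linear in $|\Gg'|$.

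The main obstacle I anticipate is keeping the arity bound truly at $2A_\Gg$ across the entire iteration: one has to argue, inductively on the recursion depth of the extraction procedure, that the carried variables $z_1,\dots,z_j$ of every fresh nonterminal are drawn from a parameter list of length at most $A_\Gg$ (so the current rule, even if freshly created, already respects this bound), and that the extracted subterm's type has arity at most $A_\Gg$ because it inherits its type from a subterm of an original right-hand side.
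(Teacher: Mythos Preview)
The paper does not prove this lemma itself; it is quoted from prior work (namely \cite[Lemma~4.1]{trans-nonempty}), so there is no in-paper argument to compare against. Your reconstruction is the standard construction and is essentially correct: abstracting out a deep argument subterm into a fresh nonterminal parametrised by its free variables, and iterating, is exactly the intended procedure, and your analysis of the order, arity, and size bounds (including the care you take in the last paragraph to keep the carried-variable count bounded by $A_\Gg$ across iterations) is sound.

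One point worth tightening is the termination argument. The sentence ``the new rule for $F$ is strictly smaller than the original rule for $X$'' accounts only for the freshly created rule, not for the modified rule for $X$, which may still have application depth greater than $2$ and may even have grown in size (when $2j+1>|L_i|$). A clean global measure is the total number, over all rules, of argument positions $L_i$ inside some compound application with $\mathsf{ad}(L_i)\geq 2$: each extraction removes exactly one such position (the one you extracted) and creates none, since the inserted term $F\,z_1\,\dots\,z_j$ has only variable arguments and the positions formerly inside $L_i$ are merely relocated to the new rule. This measure is bounded by $|\Gg|$, which simultaneously gives termination and the bound on the number of extractions that your size estimate relies on.
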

	
	We now state and prove the main lemma of this section:
	
	\begin{lemma}\label{trans-complexity}
		For every parity recursion scheme $\Gg=(\Xx,X_0,\Sigma_d,\Rr)$ in a simple form, the recursion scheme $\Gg^\dag$ (i.e., the result of the order-reducing transformation)
		is also in a simple form, and $\ord(\Gg^\dag)=\max(0,\ord(\Gg)-1)$, and $A_{\Gg^\dag}\leq A_\Gg\cdot (d+1)^{A_\Gg}$,
		and $|\Gg^\dag|=\Oo(|\Gg|\cdot (d+1)^{5\cdot A_\Gg})$.
		Moreover, $\Gg^\dag$ can be created in time linear in its size.
	\end{lemma}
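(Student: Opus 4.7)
The plan is to verify each claim in turn: the order and arity bounds via type-level induction, the simple-form claim via structural induction on terms, and the size bound by a careful accounting exploiting the simple-form hypothesis.

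I would start at the level of types. By induction on the structure of $\alpha=\alpha_1\arr\dots\arr\alpha_k\Arr\otyp^\ell\arr\otyp$, I would verify that $\ord(\alpha^{\dag_d})=\max(0,\ord(\alpha)-1)$. The base case $\ord(\alpha)=0$ is immediate; for $\ord(\alpha)=1$, the convention $\alpha_k\neq\otyp$ forces $k=0$, so $\alpha^{\dag_d}=\otyp$. For $\ord(\alpha)\geq 2$, the inductive hypothesis applied to each $\alpha_i$ gives $\ord(\alpha_i^{\dag_d})=\max(0,\ord(\alpha_i)-1)$, and taking the maximum yields $\ord(\alpha^{\dag_d})=\ord(\alpha)-1$. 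Since each $X_A^{\dag_d}$ has type $(\tp(X))^{\dag_d}$, this yields the order claim for $\Gg^\dag$. The arity of $\alpha^{\dag_d}$ equals $\sum_{i=1}^k |D_d|^{\gar(\alpha_i)}\leq A_\Gg\cdot(d+1)^{A_\Gg}$, using $k\leq A_\Gg$ and $\gar(\alpha_i)\leq A_\Gg$, which gives the bound on $A_{\Gg^\dag}$.

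Next I would show that $\Gg^\dag$ is in a simple form, by proving $\mathsf{ad}(\tr_d(A,Z,M))\leq\mathsf{ad}(M)$ by structural induction on $M$. Cases (1)--(4) produce atomic terms and case (5) preserves the node-constructor structure, hence application depth. Case (6) wraps transformed subterms in a fixed node-constructor tree, so the resulting depth is $\max(\mathsf{ad}(\tr_d K),\mathsf{ad}(\tr_d L))$, which by the inductive hypothesis is at most $\mathsf{ad}(K\,L)$. Case (7) produces a compound application with depth $\max(\mathsf{ad}(\tr_d K),\mathsf{ad}(\tr_d L)+1)\leq\mathsf{ad}(K\,L)$. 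Applying this to each right-hand side $R$ of $\Gg$ (with $\mathsf{ad}(R)\leq 2$), together with the trivial added rules for the two new nullary-like nonterminals, establishes simple form for $\Gg^\dag$.

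The main step is the size bound. I would prove by induction on $M$ in a simple form that $|\tr_d(A,Z,M)|=\mathcal{O}(|M|\cdot(d+1)^{2A_\Gg})$. The duplications arise in two places: case (6) produces $d+1$ transformed copies of the head and $d$ copies of the transformed right operand; case (7) produces $|D_d|^{\gar(L)}\leq(d+1)^{A_\Gg}$ copies of the transformed right operand. Unrolling a chain of case-(6) applications of length at most $A_\Gg$ contributes a factor $(d+1)^{A_\Gg}$, and the simple-form hypothesis bounds the nesting of applications by two layers, so the overall blowup per right-hand side is $\mathcal{O}((d+1)^{2A_\Gg})$. Each original rule spawns $|D_d|^{\gar(X)}\leq(d+1)^{A_\Gg}$ rules in $\Gg^\dag$, and the left-hand side of each spawned rule replaces $k$ variables by $\sum_i|D_d|^{\gar(y_i)}\leq k\cdot(d+1)^{A_\Gg}$ variables. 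Multiplying these contributions gives $|\Gg^\dag|=\mathcal{O}(|\Gg|\cdot(d+1)^{5A_\Gg})$, and linear-time construction follows because each recursive call of $\tr_d$ emits a constant number of syntax nodes beyond its recursive results. The main obstacle is precisely this size count: keeping the exponent $\mathcal{O}(A_\Gg)$ requires tracking how the $(d+1)$- and $(d+1)^{A_\Gg}$-fold duplications of cases (6) and (7) compose through nested applications, and it is the simple-form hypothesis that prevents these duplications from cascading beyond depth $2$.
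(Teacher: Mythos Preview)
Your approach is the same as the paper's, and the order, arity, and simple-form arguments are correct. There is, however, a real accounting error in the size bound.

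You claim that for a right-hand side $M$ of application depth at most $2$ one has $|\tr_d(A,Z,M)|=\mathcal{O}(|M|\cdot(d+1)^{2A_\Gg})$, arguing that a chain of case-(6) steps contributes $(d+1)^{A_\Gg}$ and two nesting layers square this. This undercounts: within a \emph{single} compound application $f\,K_1\,\dots\,K_k\,L_1\,\dots\,L_\ell$ (one layer of application depth) \emph{both} duplications occur---the case-(6) chain over the $L_j$ contributes $(d+1)^\ell\leq(d+1)^{A_\Gg}$, and the case-(7) chain over the $K_i$ independently contributes a factor $(d+1)^{\gar(K_i)}\leq(d+1)^{A_\Gg}$ on each $K_i$. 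Hence one layer already costs $(d+1)^{2A_\Gg}$, and two nested layers give $(d+1)^{4A_\Gg}$, which is the bound the paper states (``size gets multiplicated by at most $O((d+1)^{2c\cdot A_\Gg})$'' for depth $c$). A concrete witness: take $f$ with a single higher-order argument of ground arity $\ell$, and let that argument itself be $g\,N_1\,\dots\,N_m$ with each $N_j$ atomic of ground arity close to $A_\Gg$; the outer case-(7) step copies the argument $(d+1)^\ell$ times and, inside, case (7) again copies each $N_j$ roughly $(d+1)^{A_\Gg}$ times.

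As a consequence, your ``multiplying these contributions'' step does not actually reach the exponent $5A_\Gg$: with your stated pieces ($(d+1)^{2A_\Gg}$ per right-hand side and $(d+1)^{A_\Gg}$ for rule replication, the left-hand side blowup being additive, not multiplicative) you would get only $(d+1)^{3A_\Gg}$. The fix is simply to replace your per-right-hand-side factor $(d+1)^{2A_\Gg}$ by $(d+1)^{4A_\Gg}$; together with the $(d+1)^{A_\Gg}$ from nonterminal replication this yields the paper's $(d+1)^{5A_\Gg}$.
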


	\begin{proof}
		The part about the running time is obvious.
		It is also easy to see by induction that $\ord(\alpha^{\dag_d})=\max(0,\ord(\alpha)-1)$.
		It follows that the order of the recursion scheme satisfies the same equality,
		because nonterminals of $\Gg^\dag$ have type $\alpha^{\dag_d}$ for $\alpha$ being the type of a corresponding nonterminal of $\Gg$.

		Recall that in the type $\alpha^{\dag_d}$ obtained from $\alpha=(\alpha_1\arr\dots\arr\alpha_k\arr\otyp)$,
		every $\alpha_i$ either disappears or becomes (transformed and) repeated $|D_d|^{\gar(\alpha_i)}$ times, that is, at most $(d+1)^{A_\Gg}$ times.
		This implies the inequality concerning $A_{\Gg^\dag}$.

		Every compound application can be written as $f\,K_1\,\dots\,K_k\,L_1\,\dots\,L_\ell$, where $f$ is a nonterminal or a variable, and $\ell=\gar(f)$.
		In such a term, every $K_i$ (after being transformed) gets repeated $|D_d|^{\gar(K_i)}$ times, that is, at most $(d+1)^{A_\Gg}$ times.
		Then, for every $L_i$ we replicate the outcome $d+1$ times, and we append a small prefix;
		this replication happens $\ell$ times (and $\ell\leq A_\Gg$).
		In consequence, we easily see by induction that while transforming a term of application depth $c$, its size gets multiplicated by at most $O((d+1)^{2c\cdot A_\Gg})$.
		Moreover, every nonterminal $X$ is repeated $|D_d|^{\gar(X)}$ times, that is, at most $(d+1)^{A_\Gg}$ times.
		Because the application depth of right-hand sides of rules is at most $2$, this bounds the size of the new recursion scheme by $\Oo(|\Gg|\cdot (d+1)^{5\cdot A_\Gg})$.

		Looking again at the above description of the transformation, we can notice that the application depth cannot grow;
		in consequence the property of being in a simple form is preserved.
	\end{proof}

	Thus, if we want to check whether Eve wins in the tree generated by a parity recursion scheme $\Gg$ of order $n$,
	we can first convert $\Gg$ to a simple form, and then apply the order-reducing transformation $n$ times.
	This gives us a parity recursion scheme of order $0$, which can be seen as a finite parity game with $d$ priorities.
	Such a game can be solved in time $O(N^4\cdot 2^d)$, where $N$ is its size~\cite{Calude}.
	Thus, by \cref{simpl-complexity,trans-complexity}, the whole algorithm works in time $n$-fold exponential in $A_\Gg$ and $d$, and polynomial (quartic) in $|\Gg|$.
	
	If $\Gg$ is created as a product of a recursion scheme $\Hh$ and an alternating parity automaton $\Aa$,
	the running time is $n$-fold exponential in $A_\Hh$ and $|\Aa|$, and quartic in $|\Hh|$ (cf.~\cref{app:product}).

\section{Correctness}\label{sec:correctness}

	In this section we finish a proof of \cref{thm:main} by showing that the winner in the tree generated by the recursion scheme $\Gg^\dag$ resulting from transforming a recursion scheme $\Gg$
	is the same as in the tree generated by the original recursion scheme $\Gg$.
	Our proof consists of three parts.
	First, we show that reductions performed by $\Gg$ can be reordered, so that we can postpone substituting for (trailing) variables of order $0$.
	To store such postponed substitutions, called \emph{explicit substitutions}, we introduce \emph{extended trees}.
	Second, we show that such reordered reductions in $\Gg$ are in a direct correspondence with reductions in $\Gg^\dag$.
	Finally, we show how winning strategies of particular players from the tree generated by $\Gg^\dag$ can be transferred to the tree generated by $\Gg$.

    \subparagraph{Extended trees and terms.}

	In the sequel, trees and terms defined previously are sometimes called non-extended trees and non-extended terms,
	in order to distinguish them from extended trees and extended terms defined below.
	Having a set $\Zz$ of variables of type $\otyp$ and a set of symbols $\Sigma$,
	(potentially infinite) \emph{extended trees} over $(\Zz,\Sigma)$ are defined by coinduction: every extended tree over $(\Zz,\Sigma)$ is of the form either
	\begin{itemize}
	\item	$\symb{a,T_1,\dots,T_k}$, where $a\in\Sigma$ and $T_1,\dots,T_k$ are again extended trees over $\Sigma$, or
	\item	$z$ for some variable $z\in\Zz$, or
	\item	$\esubst{T}{U}{z}$, where $z\not\in\Zz$ is a variable of type $\otyp$, and $T$ is an extended tree over $(\Zz\cup\set{z},\Sigma)$,
		and $U$ is an extended tree over $(\Zz,\Sigma)$.
	\end{itemize}
	The construction of the form $\esubst{T}{U}{z}$ is called an \emph{explicit substitution}.
	Intuitively, it denotes the tree obtained by substituting $U$ for $z$ in $T$.
	Notice that the variable $z$ being free in $T$ becomes bound in $\esubst{T}{U}{z}$.

	Likewise, having a set of typed nonterminals $\Xx$, a set $\Zz$ of variables of type $\otyp$, and a set of symbols $\Sigma$,
	\emph{extended terms} over $(\Xx,\Zz,\Sigma)$ are defined by induction:
	\begin{itemize}
	\item	if $z\not\in\Zz$ is a variable of type $\otyp$, and $E$ is an extended term over $(\Xx,\Zz\cup\set{z},\Sigma)$, and $L$ is a non-extended term of type $\otyp$ over $(\Xx,\Zz,\Sigma)$,
		then $\esubst{E}{L}{z}$ is an extended term over $(\Xx,\Zz,\Sigma)$;
	\item	every non-extended term of type $\otyp$ over $(\Xx,\Zz,\Sigma)$ is an extended term over $(\Xx,\Zz,\Sigma)$.
	\end{itemize}
	Notice that explicit substitutions can be placed anywhere inside an extended tree, while in an extended term they are allowed only to surround a non-extended term.

	Of course an extended tree over $(\Zz,\Sigma)$ can be also seen as an extended tree over $(\Zz',\Sigma)$, where $\Zz'\supseteq\Zz$;
	likewise for extended terms.
	In the sequel, such extending of the set of variables is often performed implicitly.

	Having a recursion scheme $\Gg=(\Xx,X_0,\Sigma,\Rr)$, for every set $\Zz$ of variables of type $\otyp$
	we define an \emph{ext-reduction} relation $\erew_\Gg$ between extended terms over $(\Xx,\Zz,\Sigma)$,
	as the least relation such that
	\begin{itemize}
	\item	$X\,K_1\,\dots\,K_k\,L_1\,\dots\,L_\ell\erew_\Gg\esubstdots{R\subst{K_1/y_1,\dots,K_k/y_k,z_1'/z_1,\dots,z_\ell'/z_\ell}}{L_1}{z_1'}{L_\ell}{z_\ell'}$
		if\linebreak $\ell=\gar(X)$, and $\Rr(X)=(X\,y_1\,\dots\,y_k\,z_1\,\dots\,z_\ell\to R)$, and $z_1',\dots,z_\ell'$ are fresh variables of type $\otyp$ not appearing in $\Zz$.
	\end{itemize}
	Then, we define by coinduction the extended tree (over $(\Zz,\Sigma)$) \emph{ext-generated} by $\Gg$ from an extended term $E$ (over $(\Xx,\Zz,\Sigma)$), denoted $\BText_\Gg(E)$:
	\begin{itemize}
	\item	if $E\erew_\Gg^*\symb{a,F_1,\dots,F_k}$, then $\BText_\Gg(E)=\symb{a,\BText_\Gg(F_1),\dots,\BText_\Gg(F_k)}$;
	\item	if $E\erew_\Gg^*\esubst{F}{L}{z}$, then $\BText_\Gg(E)=\esubst{\BText_\Gg(F)}{\BText_\Gg(L)}{z}$;
	\item	otherwise, $\BText_\Gg(E)=\symb{\omega}$.
	\end{itemize}
	The extended tree ext-generated by $\Gg$ (without mentioning a term), denoted $\BText(\Gg)$, is defined as $\BText_\Gg(X_0)$.
	Formally, the ext-generated extended tree is not unique, because arbitrary fresh names may be used for bound variables;
	we should thus identify extended trees differing only in names of bound variables.

	Finally, we say how to convert extended trees to trees, by performing all postponed substitutions.
	To this end, having fixed a set $\Sigma$ of symbols,
	we define a \emph{simplification} relation $\simpl$ between extended trees over $(\emptyset,\Sigma)$
	as the least relation such that
	\begin{itemize}
	\item	$\esubstdots{\symb{a,T_1,\dots,T_k}}{L_1}{z_1}{L_\ell}{z_\ell}\simpl\symb{a,\esubstdots{T_1}{L_1}{z_1}{L_\ell}{z_\ell},\dots,\esubstdots{T_k}{L_1}{z_1}{L_\ell}{z_\ell}}$, and
	\item	$\esubstdots{z_i}{L_1}{z_1}{L_\ell}{z_\ell}\simpl\esubstdots{L_i}{L_{i+1}}{z_{i+1}}{L_\ell}{z_\ell}$.
	\end{itemize}
	Then, we define by coinduction the \emph{expansion} of an extended tree $T$ over $(\emptyset,\Sigma)$, being a tree over $\Sigma$, and denoted $\BTsimpl(T)$:
	\begin{itemize}
	\item	if $T\simpl^*\symb{a,T_1,\dots,T_k}$, then $\BTsimpl(T)=\symb{a,\BTsimpl(T_1),\dots,\BTsimpl(T_k)}$;
	\item	otherwise, $\BTsimpl(T)=\symb{\omega}$.
	\end{itemize}

	The following \lcnamecref{std2ext} says that instead of generating a tree, we can first ext-generate an extended tree, and then expand all the explicit substitutions:

	\begin{lemma}\label{std2ext}
		For every recursion scheme $\Gg$ it holds that $\BT(\Gg)=\BTsimpl(\BText(\Gg))$.
	\end{lemma}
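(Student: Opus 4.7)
The plan is to prove this by a coinductive bisimulation argument, after first strengthening the statement to cover arbitrary extended terms. Specifically, I would generalise the claim to: for every extended term $E$ of type $\otyp$ over $(\Xx,\emptyset,\Sigma_d)$, $\BT_\Gg(\widehat{E})=\BTsimpl(\BText_\Gg(E))$, where $\widehat{E}$ is the non-extended term obtained by inductively flattening all explicit substitutions in $E$ (with $\widehat{M}=M$ for non-extended $M$, and $\widehat{\esubst{E}{L}{z}}=\widehat{E}\subst{\widehat{L}/z}$). Well-definedness of $\widehat{E}$ relies on the observation that in an extended term, explicit substitutions only wrap non-extended subterms. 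The \lcnamecref{std2ext} is then the special case $E=X_0$.

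The central technical step is a simulation between the two reduction relations, which I would establish first: for every extended term $E$, (i) if $E\erew_\Gg E'$ then $\widehat{E}\rew_\Gg\widehat{E'}$, and (ii) if $\widehat{E}\rew_\Gg M$ then there exists $E'$ with $E\erew_\Gg E'$ and $\widehat{E'}=M$ (up to renaming of the fresh $z'_i$). Both parts are immediate from the definitions: the only difference between the ext-reduction and the standard reduction for $X\,K_1\,\dots\,K_k\,L_1\,\dots\,L_\ell$ is that the ext-reduction substitutes fresh variables $z'_j$ for the trailing $z_j$ and postpones the actual substitutions as explicit ones, and after flattening the two outcomes coincide.

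With the simulation in hand, I would close the coinduction as follows. Case on the behaviour of $\widehat{E}$: if $\widehat{E}\rew_\Gg^*\symb{a,M_1,\dots,M_k}$, iterating (ii) gives $E\erew_\Gg^* E'$ with $\widehat{E'}=\symb{a,M_1,\dots,M_k}$, and the only way for flattening to expose a node constructor at the root is for $E'$ itself to have the form $\esubstdots{\symb{a,F_1,\dots,F_k}}{L_1}{z_1}{L_m}{z_m}$, with $M_i=\widehat{\esubstdots{F_i}{L_1}{z_1}{L_m}{z_m}}$. Unfolding the two coinductive clauses of $\BText_\Gg$ then yields $\BText_\Gg(E)=\esubstdots{\symb{a,\BText_\Gg(F_1),\dots,\BText_\Gg(F_k)}}{\BText_\Gg(L_1)}{z_1}{\BText_\Gg(L_m)}{z_m}$, and applying the first simplification rule $m$ times gives a node-constructor root; invoking the coinductive hypothesis on each child-pair closes this case. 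The dual case, where no such reduction of $\widehat{E}$ exists, must match the corresponding $\symb{\omega}$ outcome on the right-hand side.

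The step I expect to be the main obstacle is precisely this $\symb{\omega}$ case, which requires a progress argument for $\BTsimpl$: I must show that $\BTsimpl(\BText_\Gg(E))$ can expose a node-labelled root only if ext-reductions of $E$, combined with $\simpl$-propagation of the resulting explicit-substitution chain, ultimately uncover one, and hence only if $\widehat{E}$ itself reduces to a node constructor by (i). Handling this requires care about the interaction between the coinductive generation of $\BText_\Gg$ and the finite $\simpl^*$-chain inside $\BTsimpl$, and also about $\alpha$-renaming of the fresh variables introduced by iterated ext-reductions; once these technicalities are settled, the bisimulation closes and the lemma follows.
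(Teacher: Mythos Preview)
Your overall strategy is the paper's: generalise via a flattening function (the paper calls your $\widehat{\cdot}$ by $\exp$) and prove by coinduction that $\BT_\Gg(\exp(E))=\BTsimpl(\BText_\Gg(E))$ for all closed extended terms $E$. That is the right statement to aim for.

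The gap is your simulation lemma. As defined in the paper, $\erew_\Gg$ fires \emph{only} on a bare term of the shape $X\,K_1\,\dots\,K_k\,L_1\,\dots\,L_\ell$; it does not apply under explicit substitutions. Hence after a single ext-reduction step the result generally carries explicit substitutions and no further $\erew_\Gg$ step is available, whereas $\rew_\Gg$ on the flattened term keeps going. So claim~(ii) is false for any $E$ with an explicit substitution at the head, and therefore cannot be ``iterated'' along a sequence $\widehat{E}\rew_\Gg^*\symb{a,\dots}$. A concrete failure: if $E=\esubstdots{z_i}{P_1}{z_1}{P_s}{z_s}$ then $\widehat{E}=P_i[P_{i+1}/z_{i+1}]\cdots$ may $\rew_\Gg$-reduce, but no $E'$ with $E\erew_\Gg E'$ exists at all. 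The step that disappears from your picture is exactly the ``variable head'' case, which on the right-hand side corresponds not to an $\erew_\Gg$ step but to a $\simpl$ step on $\BText_\Gg(E)$.

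The paper repairs this by writing $E=\esubstdots{M}{P_1}{z'_1}{P_s}{z'_s}$ with $M$ non-extended and doing a case split on the head of $M$, inside a lexicographic induction on $(r,s)$ where $r$ is the number of $\rew_\Gg$ steps from $\exp(E)$ to a node constructor: a nonterminal head performs one $\erew_\Gg$ step on $M$ (decreasing $r$), a variable head $M=z'_i$ drops the first $i$ explicit substitutions (decreasing $s$, and matching a $\simpl$ step on $\BText_\Gg$), and a node-constructor head closes via coinduction. If you rewrite your simulation as this three-way case analysis you recover the paper's proof; but as stated, (ii) does not hold and the iteration you rely on does not go through.
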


	The lemma can be proved in a standard way;
	a proof is contained in \cref{app:std2ext} (similar lemmata appear in previous work~\cite[Lemma 18]{word2tree},~\cite[Lemma 5.1]{trans-nonempty}).

    \subparagraph{Transforming extended parity trees.}

    	An \emph{extended parity tree} is an extended tree whose expansion is a parity tree.
    	We now show how the transformation, defined previously for terms, can be applied to extended parity trees.
    	Namely, we define $\trt_d(Z,T)$ when $T$ is an extended tree over $(\Zz,\Sigma_d)$ for some set $\Zz$ of variables of type $\otyp$,
    	and $Z\colon\Zz\to D_d$ (we do not need an ``$A$'' argument, used previously to store declarations for arguments, because extended trees have no arguments).
    	The definition is by coinduction:
	\begin{enumerate}[(1')]
	\setcounter{enumi}{2}
	\item	$\trt_d(Z,z)=\top$ if $Z(z)$ is odd;
	\item	$\trt_d(Z,z)=\bot$ if $Z(z)$ is even;
	\item	$\trt_d(Z,\symb{\player,p,K_1,\dots,K_k})=\symb{\player,p,\trt_d(Z\restr_p,K_1),\dots,\trt_d(Z\restr_p,K_k)}$;
	\setcounter{enumi}{7}
	\item	$\trt_d(Z,\esubst{T}{U}{z})=\symb{\Eve,1,T_1^U,T_2^U,\dots,T_d^U,T_{2d}}$,
		where we take $T_r^U=\symb{\Adam,1,T_r,\symb{\Eve,\allowbreak r,\allowbreak\trt_d(Z\restr_r,\allowbreak U)}}$ for $r\in\scope{d}$
		and $T_r=\trt_d(Z\mapch{z\mapsto r},T)$ for $r\in D_d$.
	\end{enumerate}
	Notice that $\tr_d$ transforms a term $z$ to nonterminals $\downVdash$ or $\upVdash$,
	while $\trt_d$ transforms an extended tree $z$ to trees $\top$ or $\bot$, generated from those nonterminals.
	
	In the next \lcnamecref{ext2trans} we observe that the tree generated by the transformed recursion scheme $\Gg^\dag$
	can be obtained by transforming the extended tree ext-generated by the original recursion scheme $\Gg$:

	\begin{lemma}\label{ext2trans}
		For every parity recursion scheme $\Gg$ it holds that $\trt_d(\emptyset,\BText(\Gg))=\BT(\Gg^\dag)$.
	\end{lemma}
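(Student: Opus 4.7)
The plan is to reduce the lemma to a coinductive correspondence between ext-generation in $\Gg$ and generation in $\Gg^\dag$, mediated by a uniform transformation of extended terms. Since $\BText(\Gg) = \BText_\Gg(X_0)$ and $\BT(\Gg^\dag) = \BT_{\Gg^\dag}(\tr_d(\emptyset, \emptyset, X_0))$, it suffices to prove the more general claim
\begin{align*}
\trt_d(Z, \BText_\Gg(E)) = \BT_{\Gg^\dag}(\tr_d(\emptyset, Z, E))
\end{align*}
for every extended term $E$ of type $\otyp$ over $(\Xx, \dom(Z), \Sigma_d)$ and every declaration $Z$. For this, I would first extend $\tr_d$ to extended terms by a single new clause mirroring clause~(7') of $\trt_d$:
\begin{align*}
\tr_d(\emptyset, Z, \esubst{F}{L}{z}) = \symb{\Eve, 1, T_1^L, T_2^L, \dots, T_d^L, T_{2d}},
\end{align*}
with $T_r = \tr_d(\emptyset, Z\mapch{z \mapsto r}, F)$ for $r \in D_d$ and $T_r^L = \symb{\Adam, 1, T_r, \symb{\Eve, r, \tr_d(\emptyset, Z\restr_r, L)}}$ for $r \in \scope{d}$. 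This extension is essentially forced, since clause~(6) of $\tr_d$ already produces the same gadget shape whenever a trailing order-$0$ argument is applied, and an explicit substitution in an extended term plays exactly the role of such a pending application.

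The proof of the generalized claim splits into two ingredients. First, I would prove a \emph{simulation lemma}: whenever $E \erew_\Gg F$, also $\tr_d(\emptyset, Z, E) \rew_{\Gg^\dag}^* \tr_d(\emptyset, Z, F)$. A single $\erew_\Gg$ step fires a rule $X\,y_1 \dots y_k\,z_1 \dots z_\ell \to R$ of $\Gg$ and produces nested explicit substitutions for the trailing variables; on the transformed side, each rule in $\Rr^{\dag_d}$ has the form $X_A^{\dag_d}\,(y_{i,B}^{\dag_d})_{i,B} \to \tr_d(\emptyset, \mapch{z_i \mapsto A(\ell+1-i)}, R)$, and several firings of such rules, one for each choice of $A$ demanded by the outer Eve/Adam gadgets, together reconstruct the gadget produced by $\tr_d$ on the reduct. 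The equality of the two sides rests on a routine substitution lemma: replacing $y_i$ in $R$ by $K_i$ and then transforming equals transforming $R$ and then replacing each $y_{i,B}^{\dag_d}$ by $\tr_d(B, Z, K_i)$. Second, I would run the main coinduction on trees: for every compatible $(E, Z)$, the two trees $\trt_d(Z, \BText_\Gg(E))$ and $\BT_{\Gg^\dag}(\tr_d(\emptyset, Z, E))$ are shown to share a root and to have coordinate-wise matching subtrees of the same form $(\trt_d(Z', \BText_\Gg(E')), \BT_{\Gg^\dag}(\tr_d(\emptyset, Z', E')))$; the three cases of $\BText_\Gg$ handle node constructors (via clauses~(5) and~(5')), explicit substitutions (via clause~(7') and the new clause for $\tr_d$), and the $\omega$-case, which cannot arise because $\BT(\Gg) = \BTsimpl(\BText(\Gg))$ is by hypothesis a parity tree without $\omega$-nodes.

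The main obstacle will be the simulation lemma, and within it the bookkeeping of declarations through the right-to-left indexing of $A$ used in the rules for $X_A^{\dag_d}$. One has to align the fresh variables $z_i'$ introduced by the ext-reduction with the entries of $A$, propagate $Z$ through the $\restr_p$ shifts (in the spirit of \Cref{shift2fulfilled}) when descending into node constructors, and verify at the leaves that the trees generated from the dedicated nonterminals $\upVdash$ and $\downVdash$ of $\Rr^{\dag_d}$ coincide with the trees $\bot$ and $\top$ that $\trt_d$ produces for a bare variable. Once this alignment is pinned down, both the simulation lemma and the coinductive closure become a routine synchronisation argument following the structure of the generated trees.
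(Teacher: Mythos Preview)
Your overall plan is sound and is a reasonable reorganisation of the paper's argument: you extend $\tr_d$ to explicit substitutions and carry the generalized claim $\trt_d(Z,\BText_\Gg(E))=\BT_{\Gg^\dag}(\tr_d(\emptyset,Z,E))$ over all extended terms, whereas the paper restricts the claim to non-extended terms $M$ and externalises the explicit-substitution bookkeeping into a separate induction (\cref{ext2trans-aux}). Both routes use the same substitution lemma (your ``routine substitution lemma'' is exactly the paper's \cref{trans-subst-com}).

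There is, however, a concrete gap in your simulation lemma. As defined in the paper, $\rew_{\Gg^\dag}$ fires only at the head; it does not reduce under node constructors. If $E=X\,K_1\dots K_k\,L_1\dots L_\ell$ with $\ell=\gar(X)\geq 1$, then by clause~(6) the term $\tr_d(\emptyset,Z,E)$ already begins with the node constructor $\symb{\Eve,1,\dots}$, so there is \emph{no} $\rew_{\Gg^\dag}$-step from it, and $\tr_d(\emptyset,Z,E)\rew_{\Gg^\dag}^*\tr_d(\emptyset,Z,F)$ is literally false. What you actually need is the weaker statement $\BT_{\Gg^\dag}(\tr_d(\emptyset,Z,E))=\BT_{\Gg^\dag}(\tr_d(\emptyset,Z,F))$; but proving this forces you to peel off the $\ell$ nested Eve/Adam gadgets on both sides and match them layer by layer. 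At intermediate layers the $E$-side carries a nonempty $A$ (namely $\tr_d(A,Z,X\,K_1\dots K_k\,L_1\dots L_{\ell-j})$), so you cannot appeal to your generalized claim there; you need a separate induction on~$\ell$. That induction is precisely the content of the paper's \cref{ext2trans-aux}, so your ``simulation lemma'' is not a shortcut around it but a repackaging of it. Once you state the simulation at the level of $\BT_{\Gg^\dag}$ and include that induction on~$\ell$, your proof goes through. (Minor: the explicit-substitution clause of $\trt_d$ is numbered~(8') in the paper, not~(7').)
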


	The proof is purely syntactical, and is contained in \cref{app:ext2trans}.

    \subparagraph{Transforming strategies.}

	We finish our correctness proof by showing the following \lcnamecref{strat-trans}:
	
	\begin{lemma}\label{strat-trans}
		Let $T$ be an extended parity tree over $(\emptyset,\Sigma_d)$.
		If a player $\player\in\set{\Adam,\Eve}$ wins in $\trt_d(\emptyset,T)$, then $\player$ wins also in $\BTsimpl(T)$.
	\end{lemma}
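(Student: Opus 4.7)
By determinacy of parity tree games (stated in the preliminaries), it suffices to show for each player $\player\in\set{\Adam,\Eve}$ that a winning strategy in $\trt_d(\emptyset,T)$ transfers to a winning strategy in $\BTsimpl(T)$. The case $\player=\Adam$ is handled by a dual argument; for clarity I focus on the case $\player=\Eve$.

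Fix Eve's winning strategy $\sigma$ in $\trt_d(\emptyset,T)$. The plan is to define Eve's strategy $\sigma'$ in $\BTsimpl(T)$ coinductively, by following $\sigma$-consistent ``reference plays'' in $\trt_d(\emptyset,T)$ that parallel the walk in $T$ corresponding to the current $\BTsimpl$-play. At shape nodes of $T$ the reference play visits the analogous node, and $\sigma'$ simply copies $\sigma$'s choice. Upon entering an explicit substitution $\esubst{T'}{U}{z}$, the reference play reaches a gadget where $\sigma$ prescribes a declaration $r$; I then fork the reference into a \emph{check-reference} (in which Adam enters $T_r=\trt_d(Z\mapch{z\mapsto r},T')$), used to guide Eve through the shape walk of $T'$, and a \emph{commit-reference} (in which Adam proceeds to $\trt_d(Z\restr_r,U)$ after seeing priority $r$), used to guide Eve through $U$ whenever the $\BTsimpl$-play dives at $z$. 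The construction extends recursively inside $U$ as further substitutions are encountered.

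The correctness hinges on an iterated application of \cref{shift2fulfilled}: since $\sigma$ is winning, a $z$-hole can only be reached in the check-reference if that hole is $\top$, which forces the shape priorities $p_1,\dots,p_k$ along the segment from the gadget to the hole to fulfil $r$, i.e., their usual leader $L$ satisfies $L\succeq r$. Thus each commit priority $r$ appearing in a reference play at a dive is $\preceq$-dominated by a shape priority on the corresponding branch of the $\BTsimpl$-play.

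To verify that $\sigma'$ is winning, fix any Adam strategy $\tau'$ in $\BTsimpl(T)$ and consider the resulting infinite play $\xi'$. By composing reference plays along $\xi'$ (following the check-reference until each dive, then switching to the commit-reference), one obtains a single $\sigma$-consistent infinite play in $\trt_d(\emptyset,T)$, which Eve wins. Let $p^*$ be the $\preceq$-greatest priority appearing infinitely often on this reference play; then $p^*$ is even. Using the domination above, $p^*$ is either already a shape priority appearing infinitely often on $\xi'$, or it is a commit $r=p^*$ whose segment leader $L\succeq p^*$ is itself an even shape priority at least $p^*$ (since the only priorities $\succeq p^*$ are even and at least $p^*$, and there are finitely many of them, so some fixed such $L$ must recur); in the latter case this $L$ appears infinitely often on $\xi'$ as a shape priority. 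Either way the $\preceq$-maximum infinitely often on $\xi'$ is even, so Eve wins $\xi'$. The main obstacle is this parity-transfer step, carefully balancing the additional commit priorities introduced in the reference play against the shape-priority dominance afforded by \cref{shift2fulfilled}.
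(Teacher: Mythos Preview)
Your strategy construction and the fulfilment argument via iterated \cref{shift2fulfilled} match the paper's approach. The paper organises the references as a stack $w_0,w_1,\dots,w_\ell$ together with an explicit invariant, but the content is the same: the check-reference guides Eve through shape nodes, the commit-references wait at the committed priority, and reaching a $\top$-hole certifies that the traversed shape priorities fulfil the declaration $r$.

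There is, however, a real gap in the parity-transfer step. You only establish that some even $L\succeq p^*$ occurs infinitely often on $\xi'$ and then conclude ``the $\preceq$-maximum infinitely often on $\xi'$ is even, so Eve wins''. That last implication is false: for the set $\{2,3\}$ the $\preceq$-maximum is $2$, yet the ordinary maximum is $3$ and Adam wins. What is missing is the converse bound: if some odd $m$ were the ordinary maximum occurring infinitely often on $\xi'$, then cofinitely many of the segments containing $m$ have leader exactly $m$, so their declarations are odd and $\geq m$, hence some odd priority $\geq m$ recurs infinitely often on the reference branch, contradicting that its maximum $p^*$ is even. The paper packages both directions into a single notion of \emph{$\preceq$-contraction} (the priority sequence of the limiting reference branch is a $\preceq$-contraction of that of $\xi'$) and a dedicated preservation lemma. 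Two smaller points: your ``composed reference'' as described (follow check, then jump to the commit pointer at each dive) is a jumping sequence of nodes, not a branch of $\trt_d(\emptyset,T)$; the paper explicitly argues that this sequence \emph{converges} to a branch $\zeta$, and it is $\zeta$ that is $\sigma$-consistent and hence won by Eve. And the Adam case is not a straight dual: the gadget is asymmetric (Eve declares first, Adam responds), so on Adam's side one must select the $\preceq$-worst $r$ at which his winning strategy enters the check subtree, together with a separate commit index $s$, and work with ``not fulfilling $r$'' throughout.
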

	
	Recall that the goal of this section is to prove that the winner in $\BT(\Gg^\dag)$ is the same as in $\BT(\Gg)$, for every parity recursion scheme $\Gg$.
	This follows from the above \lcnamecref{strat-trans} used for $T=\BText(\Gg)$, because $\BT(\Gg^\dag)=\trt_d(\emptyset,\BText(\Gg))$ by \cref{ext2trans}
	and $\BT(\Gg)=\BTsimpl(\BText(\Gg))$ by \cref{std2ext}.
	
	We now come to a proof of \cref{strat-trans}.
	In the sequel we assume a fixed extended parity tree $T$ over $(\emptyset,\Sigma_d)$.
	Suppose first that it is Eve who wins in $\trt_d(\emptyset,T)$;
	thus, we also fix her winning strategy $\rho$ in this tree.
	Our goal is to construct Eve's winning strategy $\rho'$ in $\BTsimpl(T)$.
	
	In the proof, we use two additional notions.
	First, we say that a sequence of priorities $r_1,\dots,r_k$ is a \emph{$\preceq$-contraction} of a sequence of priorities $p_1,\dots,p_n$
	if the latter can be split at some indices $i_0,i_1,\dots,i_k$,
	where $0=i_0\leq i_1\leq\dots\leq i_k=n$, so that
	for every $j\in\scope{k}$ the infix $p_{i_{j-1}+1},p_{i_{j-1}+2},\dots,p_{i_j}$ fulfils declaration $r_j$.
	Likewise we define $\preceq$-contractions for infinite sequences, only there are infinitely many splitting indices
	(which necessarily tend to infinity, meaning that the whole infinite sequence is split).

	Notice that we allow empty infixes, so one can arbitrarily insert odd numbers $r_j$ (i.e., numbers $r_j$ fulfilled by the empty sequence) to the $\preceq$-contraction.
	For example, $3,4,2$ is a $\preceq$-contraction of $4,3,2,3,4$ because the empty sequence fulfils $3$, and $4,3$ fulfils $4$, and $2,3,4$ fulfils $2$.
	On the other hand, $3,4,2$ is not a $\preceq$-contraction of $4,3,2,3$.
	The idea of $\preceq$-contractions is to describe what happens when we move from $\BTsimpl(T)$ to $\trt_d(\emptyset,T)$.
	Indeed, if $T$ has a subtree of the form $\esubst{U}{V}{z}$,
	then in $\trt_d(\emptyset,T)$ the play can continue to $V$ after playing only an Eve's declaration $r$ (skipping completely $U$),
	while in $\BTsimpl(T)$ before reaching $V$ we traverse through $U$, where visited priorities are intended to fulfil $r$.

	It is easy to see that $\preceq$-contractions are transitive, and that they can make the situation only worse for Eve:
	
	\begin{lemma}\label{contraction-transitive}
		If a sequence $\pi_1$ is a $\preceq$-contraction of a sequence $\pi_2$, which is in turn a $\preceq$-contraction of a sequence $\pi_3$,
		then $\pi_1$ is a $\preceq$-contraction of $\pi_3$.
	\end{lemma}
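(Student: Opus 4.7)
The plan is to construct explicit splitting indices witnessing that $\pi_1$ is a $\preceq$-contraction of $\pi_3$, obtained by composing the splittings supplied by the hypotheses. Let $\pi_1 = r_1,\dots,r_k$, and let $0 = i_0 \leq i_1 \leq \dots \leq i_k$ be indices splitting $\pi_2 = q_1, q_2, \dots$ into infixes $\mu_j := q_{i_{j-1}+1},\dots,q_{i_j}$, each fulfilling $r_j$. Let $0 = j_0 \leq j_1 \leq \dots$ be indices splitting $\pi_3$ into infixes $\nu_1, \nu_2, \dots$, with each $\nu_s$ fulfilling $q_s$. As the composite splitting of $\pi_3$ I take $I_j := j_{i_j}$; its $j$-th infix is the concatenation $\nu_{i_{j-1}+1}\,\nu_{i_{j-1}+2}\,\cdots\,\nu_{i_j}$ of those $\nu_s$'s indexed by the positions inside $\mu_j$. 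In the infinite case, since the $i_j$'s and $j_s$'s tend to infinity, so do the $I_j$'s.

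Correctness then reduces to the following substitution lemma: \emph{if $\mu = q_1,\dots,q_m$ fulfils $r$ and each finite sequence $\nu_i$ fulfils $q_i$, then $\nu_1\cdots\nu_m$ fulfils $r$.} Applying this lemma to each $\mu_j$ together with its corresponding $\nu_s$'s shows that each composite infix of $\pi_3$ fulfils $r_j$, as required. Since fulfilment of $r$ is the condition that the leader is $\succeq r$ in the order $\preceq$ of \cref{sec:prelim}, and since $\preceq$ is total, it suffices to prove the stronger claim that the leader of $\nu_1\cdots\nu_m$ is $\succeq$ the leader of $\mu$. Writing $L_i$ for the leader of $\nu_i$, the hypotheses become $L_i \succeq q_i$; the leader $L$ of $\nu_1\cdots\nu_m$ equals $\max_i L_i$ in the \emph{standard} order, and the leader $M$ of $\mu$ equals $\max_i q_i$ in the standard order. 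The goal is thus $L \succeq M$.

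The main obstacle is precisely this inequality, since $L$ and $M$ are standard maxima while $\preceq$ interleaves odd and even priorities. I plan a case analysis on the parity of $L$. If $L$ is even, then either $M$ is odd and $M \preceq L$ is automatic, or $M$ is even and, picking $j^*$ with $q_{j^*} = M$, the hypothesis $L_{j^*} \succeq M$ forces $L_{j^*}$ to be even and $\geq M$ in the standard order, so $L \geq L_{j^*} \geq M$ with both even, yielding $M \preceq L$. If $L$ is odd, picking $i^*$ with $L_{i^*} = L$, the condition $L \succeq q_{i^*}$ with $L$ odd forces $q_{i^*}$ to be odd with $q_{i^*} \geq L$ in the standard order, so $M \geq L$; moreover $M$ cannot be even, for then picking $j^*$ with $q_{j^*} = M$ would give an even $L_{j^*}$ with $L \geq L_{j^*} \geq M \geq L$ standardly, forcing $L = M$ despite opposite parities. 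So $M$ is odd with $M \geq L$ standardly, yielding $M \preceq L$ in this case as well.
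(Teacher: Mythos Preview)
Your proof is correct. The paper does not actually prove this \lcnamecref{contraction-transitive}; it merely asserts that ``it is easy to see that $\preceq$-contractions are transitive'' and states the result without argument. Your approach---composing the two splittings via $I_j=j_{i_j}$ and reducing correctness to a substitution lemma about leaders---is exactly the natural route, and the parity case analysis establishing $L\succeq M$ is sound. One tiny remark: your case analysis implicitly assumes $m\geq 1$ (you pick $i^*$, $j^*$); the degenerate case $m=0$ is trivial since both leaders equal $1$, but it would do no harm to mention it explicitly.
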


	\begin{lemma}\label{contsctions-preserves-win}
		If an infinite sequence $\pi_1$ is a $\preceq$-contraction of an infinite sequence $\pi_2$,
		and the greatest priority appearing infinitely often in $\pi_1$ is even, then the greatest priority appearing infinitely often in $\pi_2$ is even as well.
	\end{lemma}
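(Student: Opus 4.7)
The plan is to prove the contrapositive: assuming that the greatest priority $q^{*}$ appearing infinitely often in $\pi_{2}$ is odd, I will show that the greatest priority appearing infinitely often in $\pi_{1}$ is odd as well. First I will pick a position $N$ past which every priority in $\pi_{2}$ is at most $q^{*}$ in the natural order while $q^{*}$ still appears infinitely often; both are guaranteed by the definition of $q^{*}$. I will then choose an index $j_{0}$ such that every infix $j\geq j_{0}$ lies entirely past position $N$, which is possible because the split indices $i_{0}\leq i_{1}\leq\dots$ tend to infinity.

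The heart of the argument is a short case analysis on the leader $q_{j}$ of each infix $j\geq j_{0}$, which satisfies $q_{j}\leq q^{*}$ in the natural order. Unfolding the fulfilment condition $r_{j}\preceq q_{j}$ against the definition of $\preceq$: if $q_{j}$ is odd, then $r_{j}$ must be odd and naturally at least $q_{j}$; if $q_{j}$ is even (hence strictly less than $q^{*}$, since $q^{*}$ is odd), then $r_{j}$ is either odd (any value) or even with $r_{j}\leq q_{j}<q^{*}$ naturally. The conclusion I will extract from this is that any $r_{j}$ with $j\geq j_{0}$ and $r_{j}\geq q^{*}$ in the natural order must be odd.

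To finish, I will observe that infinitely many indices $j\geq j_{0}$ correspond to infixes containing a position of priority $q^{*}$, and every such infix has leader exactly $q^{*}$; for these $j$ the fulfilment condition forces $r_{j}$ odd and naturally $\geq q^{*}$. Thus $\pi_{1}$ contains infinitely many odd priorities $\geq q^{*}$, so its greatest priority appearing infinitely often is at least $q^{*}$ and, by the case analysis, odd — contradicting the contrapositive hypothesis that it is even. The subtlest point is the even-leader subcase: I need the $\preceq$-ordering to rule out an even $r_{j}\geq q^{*}$, which follows because among even priorities the natural order agrees with $\preceq$. Beyond unfolding the order definition, no real obstacle arises; the argument is entirely elementary.
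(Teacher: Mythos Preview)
Your argument is correct. The paper itself gives no proof of this lemma, merely introducing it (together with the preceding transitivity lemma) with the phrase ``it is easy to see''; you have supplied the details the paper omits. The contrapositive setup, the choice of a tail past which all priorities of $\pi_2$ are at most $q^*$, and the case split on the parity of the leader $q_j$ are exactly the natural way to unfold the definitions, and your verification that $r_j\preceq q_j$ forces $r_j$ to be odd whenever $r_j\geq q^*$ in the natural order is accurate.

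One small point worth making explicit: the inference ``$\pi_1$ contains infinitely many odd priorities $\geq q^*$, so its greatest priority appearing infinitely often is at least $q^*$'' tacitly uses that the priorities occurring in $\pi_1$ form a bounded set (so that pigeonhole yields a single value $\geq q^*$ recurring infinitely often). In the paper's setting all priorities lie in $\scope{d}$, so this is automatic, and indeed the very phrase ``greatest priority appearing infinitely often'' presupposes it; but strictly speaking the lemma as stated does not include this hypothesis, so you may want to add a word acknowledging it.
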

	
	We now introduce the second notion (it concerns only finite sequences, and is relative to the bound $d$ on priorities):
	for a declaration $r\in D_d$ and two sequences $\pi_1,\pi_2$ of priorities from $\scope{d}$ we say that $\pi_1$ is an \emph{$r$-extension} of $\pi_2$
	if for every sequence $\pi_3$ of priorities from $\scope{d}$ that fulfils the declaration $r$,
	the sequence $\pi_1$ is a $\preceq$-contraction of the concatenation $\pi_2\cdot\pi_3$.

	For example, the sequence $3,4,4$ is a $5$-extension of the sequence $4,3,6$ (independently from the value of $d\geq 6$),
	because the empty sequence fulfils $3$, and $4,3$ fulfils $4$, and $6,p_1,\dots,p_k$
	fulfils $4$ whenever $p_1,\dots,p_k$ fulfils $5$ (i.e., the maximum among $p_1,\dots,p_k$ is either even or at most $5$).
	Notice, moreover, that every sequence is a $2d$-extension of every sequence, because no sequence of priorities from $\scope{d}$ can fulfil the declaration $2d$.
	
	The following \lcnamecref{extension-shifted} is a direct consequence of the definition and of \cref{shift2fulfilled}:

	\begin{lemma}\label{extension-shifted}
		If a sequence $\pi$ is an $r$-extension of a sequence $p_1,\dots,p_n$,
		then $\pi$ is also an $r\restr_{p_{n+1}}$-extension of $p_1,\dots,p_n,p_{n+1}$ for every priority $p_{n+1}\in\scope{d}$.
	\end{lemma}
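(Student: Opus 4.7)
The plan is to unfold both instances of the $r$-extension definition and reduce the claim to a single application of \cref{shift2fulfilled}. Concretely, fix an arbitrary sequence $\pi_3'$ of priorities from $\scope{d}$ that fulfils the declaration $r\restr_{p_{n+1}}$. I must show that $\pi$ is a $\preceq$-contraction of $p_1,\dots,p_n,p_{n+1}\cdot\pi_3'$.

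To use the hypothesis that $\pi$ is an $r$-extension of $p_1,\dots,p_n$, I need to exhibit some sequence $\pi_3$ fulfilling $r$ such that $p_1,\dots,p_n\cdot\pi_3$ coincides with $p_1,\dots,p_n,p_{n+1}\cdot\pi_3'$. The natural candidate is $\pi_3 := p_{n+1}\cdot\pi_3'$. Since $\pi_3'$ fulfils $r\restr_{p_{n+1}}$, \cref{shift2fulfilled} (read from right to left, with $p_1$ of the lemma taken to be $p_{n+1}$ and $p_2,\dots,p_k$ taken to be $\pi_3'$) yields that $\pi_3 = p_{n+1}\cdot\pi_3'$ fulfils $r$.

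Then applying the hypothesis to this $\pi_3$ gives that $\pi$ is a $\preceq$-contraction of $p_1,\dots,p_n\cdot\pi_3$, which is literally $p_1,\dots,p_n,p_{n+1}\cdot\pi_3'$. Since $\pi_3'$ was arbitrary among sequences fulfilling $r\restr_{p_{n+1}}$, this establishes the required $r\restr_{p_{n+1}}$-extension property.

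There is no real obstacle here: the statement is essentially a rewriting of \cref{shift2fulfilled} through the definition of $r$-extension. The only thing to double-check is the edge case where $p_{n+1}$ causes $r\restr_{p_{n+1}}=2d$ or some other boundary value, but in those cases the set of sequences fulfilling $r\restr_{p_{n+1}}$ is either empty (making the implication vacuous) or behaves as described by \cref{shift2fulfilled}, so the argument goes through uniformly.
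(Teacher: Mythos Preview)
Your proof is correct and is precisely the argument the paper has in mind: the paper merely states that the lemma is ``a direct consequence of the definition and of \cref{shift2fulfilled}'', and your unfolding of the $r$-extension definition together with one application of \cref{shift2fulfilled} is exactly that direct consequence.
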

	
	Additionally, for a node $v$ (of some parity tree) we write $\pi(v)$ for the sequence of priorities in ancestors of $v$ (not including the priority in $v$).

	We now come back to the proof, showing how to construct the new strategy $\rho'$, winning for Eve in $\BTsimpl(T)$.
	In order to describe $\rho'$, we play simultaneously in both trees, $\BTsimpl(T)$ and $\trt_d(\emptyset,T)$,
	and we use moves in one tree to choose moves in the other tree.
	Namely, at every moment of the play, we remember
	\begin{itemize}
	\item	a current node $v$ in $\BTsimpl(T)$,
	\item	nodes $w_0,w_1,\dots,w_\ell$ in $\trt_d(\emptyset,T)$, for some $\ell\in\Nat$,
	\item	variables $z_1,\dots,z_\ell$ of type $\otyp$,
	\item	functions $Z_0,Z_1,\dots,Z_\ell$ storing Eve's declarations, where $Z_i\colon\set{z_{i+1},\dots,z_\ell}\to D_d$ for every $i$, and
	\item	extended trees $U_0,U_1,\dots,U_\ell$, where every $U_i$ is over $(\set{z_{i+1},\dots,z_\ell},\Sigma_d)$.
	\end{itemize}
	They satisfy the following invariant:
	\begin{enumerate}[(a)]
	\item	$\BTsimpl(T)\subtree_v=\BTsimpl(\esubstdots{U_0}{U_1}{z_1}{U_\ell}{z_\ell})$,
	\item	$\trt_d(\emptyset,T)\subtree_{w_i}=\trt_d(Z_i,U_i)$ for all $i\in\set{0,1,\dots,\ell}$,
	\item\label{inv:c}
		$\pi(w_0)$ is a $\preceq$-contraction of $\pi(v)$, and
	\item\label{inv:d}
		$\pi(w_j)$ is a $Z_i(z_j)$-extension of $\pi(w_i)$,
		for all $i,j$ such that $0\leq i<j\leq\ell$.
	\end{enumerate}
	
	We start with $\ell=0$, with $v$ and $w_0$ at the root of $\BTsimpl(T)$ and $\trt_d(\emptyset,T)$, respectively, with $Z_0=\emptyset$, and with $U_0=T$.
	The invariant is clearly satisfied.
	
	Then, during the play, we have one of three cases, depending on the shape of $U_0$:
	\begin{enumerate}
	\item\label[case]{case:1}
		First, assume that $U_0=\symb{\player,p,T_1,\dots,T_k}$.
		Then
		\begin{align*}
			\BTsimpl(T)\subtree_v&=\symb{\player,p,\BTsimpl(\esubstdots{T_1}{U_1}{z_1}{U_\ell}{z_\ell}),\dots,\BTsimpl(\esubstdots{T_k}{U_1}{z_1}{U_\ell}{z_\ell})};\\
			\trt_d(\emptyset,T)\subtree_{w_0}&=\symb{\player,p,\trt_d(Z_0,T_1),\dots,\trt_d(Z_0,T_k)}.
		\end{align*}
		If $\player=\Adam$, Adam chooses some child of $v$ in $\BTsimpl(T)$,
		and we choose the same child of $w_0$ in $\trt_d(\emptyset,T)$.
		If $\player=\Eve$, Eve chooses some child of $w_0$ in $\trt_d(\emptyset,T)$, according to her strategy $\rho$, and in $\rho'$ we choose the same child of $v$.
		Thus, in both cases, we move both $v$ and $w_0$ to their $c$-th child, for some $c\in\scope{k}$.
		We also take $Z_0\restr_p$ as the new $Z_0$ and $T_c$ as the new $U_0$.
		\cref{extension-shifted} ensures that \cref{inv:d} of the invariant is preserved.
	\item\label[case]{case:2}
		Another possibility is that $U_0$ is a variable, that is, $U_0=z_c$ for some $c\in\scope{\ell}$.
		Then $\trt_d(\emptyset,T)\subtree_{w_0}$ (i.e., $\trt_d(Z_0,U_0)$) is either $\bot$ or $\top$, depending on the parity of $Z_0(z_c)$.
		But our play in $\trt_d(\emptyset,T)$ follows an Eve's winning strategy, so it will be won by Eve, thus the subtree cannot be $\bot$, in which Eve is losing.
		In consequence $Z_0(z_c)$ is odd, so the empty sequence fulfils $Z_0(z_c)$.
		This implies that $\pi(w_c)$, being an $Z_0(z_c)$-extension of $\pi(w_0)$, is its $\preceq$-contraction,
		and thus also an $\preceq$-contraction of $\pi(v)$ (by \cref{contraction-transitive}).
		We discard $w_i,z_i,Z_i,U_i$ for $i<c$ (so that $w_c$ becomes now $w_0$, etc.).
	\item\label[case]{case:3}
		Finally, assume that $U_0=\esubst{V}{W}{z}$.
		Then $\trt_d(\emptyset,T)\subtree_{w_0}=\symb{\Eve,1,V_1^W,\dots,V_d^W,V_{2d}}$,
		where $V_r^W=\symb{\Adam,1,V_r,\symb{\Eve,r,\trt_d(Z_0\restr_r,W)}}$ for $r\in\scope{d}$ and $V_r=\trt_d(Z_0\subst{z\mapsto r},V)$ for $r\in D_d$.
		In such a node $w_0$ Eve, according to her strategy $\rho$, chooses a declaration $r$ by going to an appropriate subtree $V_r^W$ (or $V_r$ if $r=2d$).
		We then update our memory as follows:
		\begin{itemize}
		\item	We leave $v$ and $w_i,z_i,Z_i,U_i$ for $i\geq 1$ unchanged.
		\item	We move $w_0$ to the root of $V_r$
			(this adds once or twice priority $1$ to $\pi(w_0)$, hence \cref{inv:c} of the invariant is preserved).
		\item	Let $r'=r$ if $r\in\scope{d}$, and $r'=1$ if $r=2d$.
		\item	We add an additional node $w_{0.5}$ between $w_0$ and $w_1$ (saying this differently, we shift $w_i$ for $i\geq 1$ by one, and we insert the new node in place of $w_1$).
			For $w_{0.5}$ we choose the root of $\trt_d(Z_0\restr_{r'},W)$.
			Notice that $\pi(w_{0.5})$ is an $r$-extension of $\pi(w_0)$
			(for $r\in\scope{d}$ because $\pi(w_{0.5})$ is obtained from $\pi(w_0)$ by appending the priority $r'=r$,
			and for $r=2d$ because no sequence of priorities from $\scope{d}$ fulfils $2d$),
			and that every $\pi(w_j)$ for $1\leq j\leq\ell$ is a $Z_0\restr_{r'}(z_j)$-extension of $\pi(w_{0.5})$ (by \cref{extension-shifted}).
		\item	As $Z_0$, $U_0$, $z_{0.5}$, $Z_{0.5}$, and $U_{0.5}$ we take $Z_0\subst{z\mapsto r}$, $V$, $z$, $Z_0\restr_{r'}$, and $W$, respectively.
		\end{itemize}
	\end{enumerate}
	
	Observe that after finitely many repetitions of \cref{case:2,case:3} necessarily \cref{case:1} has to occur, where the play advances in $\BTsimpl(T)$.
	Indeed, $\esubstdots{U_0}{U_1}{z_1}{U_\ell}{z_\ell}$ has to generate the next node of $\BTsimpl(T)$ in finitely many steps;
	in particular, the number of explicit substitution at the head of $U_0$ has to be finite.
	
	We have to prove that the infinite branch $\xi$ of $\BTsimpl(T)$ obtained this way is won by Eve.
	To this end, consider the corresponding sequence of ``$w_0$'' nodes in the construction
	and observe that this sequence converges to some infinite branch $\zeta$ in $\trt_d(\emptyset,T)$.
	Indeed, whenever the sequence enters to a subtree of the form $\trt_d(Z_0,\esubst{V}{W}{z})$ and stays there forever,
	then either it enters to the subtree $V_r=\trt_d(Z_0\subst{z\mapsto r},V)$ for some $r$ and stays there forever,
	or, after some time, it enters to the subtree $\trt_d(Z_0\restr_r,W)$ for some $r$ and stays there forever.
	Moreover, the sequence of priorities on $\zeta$ is a $\preceq$-contraction of the sequence of priorities on $\xi$
	(the function from elements of the former sequence to infixes of the latter sequence, as needed for $\preceq$-contraction, is obtained as the limit
	of such functions witnessing that always $\pi(w_0)$ is a $\preceq$-contraction of $\pi(v)$).
	Since $\zeta$ agrees with the strategy $\rho$, it is won by Eve, hence by \cref{contsctions-preserves-win} also $\xi$ is won by Eve, as required.
	This finishes the proof in the case of Eve winning in $\trt_d(\emptyset,T)$.

	Suppose now that it is Adam who wins in $\trt_d(\emptyset,T)$.
	The proof in this case is similar, so we only list differences.
	First, \emph{$\succeq$-contraction} is defined like $\preceq$-contraction,
	but for every infix $p_{i_{j-1}+1},p_{i_{j-1}+2},\dots,p_{i_j}$ in the split we require that
	$r_j$ is $\succeq$ (instead of $\preceq$) than the leader of the infix.
	Second, we say that a sequence $\pi_1$ of priorities from $\scope{d}$ is an \emph{$r$-neg-extension} of a sequence $\pi_2$ of priorities from $\scope{d}$
	if for every sequence $\pi_3$ of priorities from $\scope{d}$ that does NOT fulfil the declaration $r$,
	the sequence $\pi_1$ is a $\succeq$-contraction of the concatenation $\pi_2\cdot\pi_3$.
	In \cref{inv:c,inv:d} of the invariant we replace $\preceq$-contraction by $\succeq$-contraction, and $r$-extension by $r$-neg-extension.
	Then, in \cref{case:1} of the construction we only swap the role of Eve and Adam.
	In \cref{case:2} we now have that the play is won by Adam, so $Z_0(z_c)$ is even, that is, not fulfilled by the empty sequence;
	this implies that $\pi(w_c)$, being an $Z_0(z_c)$-neg-extension of $\pi(w_0)$, is also its $\succeq$-contraction.
	The main difference is in \cref{case:3}.
	For every $r\in\scope{d}$ we know Adam's decision in the root of $V_r^W$, according to his winning strategy.
	Take the worst $r\in\scope{d}$ such that in $V_r^W$ Adam goes to the left subtree, or $r=2d$ if he goes right everywhere;
	in both cases, Adam's strategy allows to enter $V_r$.
	Let also $s$ be the best among priorities that are worse than $r$; in $V_s^W$ Adam goes to the right subtree
	(if there are no priorities worse than $r$, we choose $s$ arbitrarily, e.g., $s=1$).
	Then as the new $w_0$ we take the root of $V_r$, and as $w_{0.5}$ we take the root of $\trt_d(Z\restr_s,W)$.
	Notice that $\pi(w_{0.5})$ is an $r$-neg-extension of $\pi(w_0)$: $s$ is better or equal than the leader of every sequence not fulfilling $r$
	(also when $r$ is the worst priority, because no such a sequence exists), which ensures that the invariant is preserved.

\section{Final remarks}

	We have presented a new, simple model-checking algorithm for higher-order recursion schemes.
	One may ask whether this algorithm can be used in practice.
	Of course the complexity $n$\textsf{-EXPTIME} for recursion schemes of order $n$ is unacceptably large
	(even if we take into account the fact that we are $n$-fold exponential only in the arity of types and in the size of an automaton, not in the size of a recursion scheme),
	but one has to recall that there exist tools solving the considered problem in such a complexity.
	The reason why these tools work is that the time spent by them on ``easy'' inputs is much smaller than the worst-case complexity (and many ``typical inputs'' are indeed easy).
	Unfortunately, this is not the case for our algorithm: the size of the recursion scheme resulting from our transformation is always large.
	Moreover, it seems unlikely that any simple analysis of the resulting recursion scheme (like removing useless nonterminals or some control flow analysis) may help in reducing its size.
	Indeed, one can see that if no nonterminals nor arguments were useless in the original recursion scheme,
	then also no nonterminals nor arguments are useless in the resulting recursion scheme.
	Thus, our algorithm is mainly of a theoretical interest.

	It seems feasible that a transformation similar to the one presented in this paper can be used to solve
	the simultaneous unboundedness problem (aka.\@ diagonal problem)~\cite{diagonal-arxiv} for recursion schemes.
	Developing such a transformation is a possible direction for further work.

\bibliography{bib}

\newpage\appendix

\section{Reducing model checking to parity games}\label{app:product}

	We present here the original model-checking problem,
	and we show how it reduces to the problem actually considered in the paper.
	This part contains a rather standard material, but we include it for completeness.
	We advice the reader to read \cref{sec:prelim,sec:complexity} before reading this material, because it depends on notions introduced thereof.
	
	There exist multiple equivalent definitions of alternating parity automata operating on trees.
	We use the following definition: An \emph{alternating parity automaton} is a tuple $\Aa=(\Sigma,\ell_{\max},Q,Q_E,q_I,\delta,\eta)$,
	where $\Sigma$ is a finite \emph{input alphabet}, $\ell_{\max}\in\Nat$ is the maximal arity of nodes in considered trees,
	$Q$ is a finite set of \emph{states}, $Q_E\subseteq Q$ defines \emph{existential states} (states in $Q\setminus Q_E$ are \emph{universal}),
	$q_I\in Q$ is an \emph{initial state}, $\delta$ is a \emph{transition function}, and $\eta\colon Q\times\Sigma\to\Nat_+$ defines \emph{priorities}.
	The transition function $\delta$ maps every triple $(q,a,\ell)\in Q\times\Sigma\times\set{0,1,\dots,\ell_{\max}}$ to a nonempty subset of $Q\times\set{0,1,\dots,\ell}$.
	Such a value defines the behaviour in state $q$ and in a node with label $a$ and $\ell$ children;
	a pair $(p,c)$ in the set $\delta(q,a,\ell)$ means that we can change the state to $p$ proceeding to the $c$-th child of the current node if $c\geq 1$,
	or staying in the current node if $c=0$.
	As the size of $\Aa$, denoted $|\Aa|$, we take the sum of $|\delta(q,a,\ell)|$ over all triples $(q,a,\ell)$.
	
	We define semantics of such an automaton using run trees.
	Let $\Aa=(\Sigma,\allowbreak\ell_{\max},\allowbreak Q,\allowbreak Q_E,\allowbreak q_I,\allowbreak\delta,\allowbreak\eta)$ be an alternating parity automaton.
	We define $\RT_\Aa(q,T)$ (``$\RT$'' stands for a \emph{run tree}) by coinduction,
	where $q\in Q$ is a state, and $T$ is a tree over $\Sigma$ such that every node of $T$ has at most $\ell_{\max}$ children:
	if $T=\symb{a,T_1,\dots,T_\ell}$ and $\delta(q,a,\ell)=\set{(p_1,c_1),\dots,(p_k,c_k)}$, then
	\begin{align*}
		\RT_\Aa(q,T)=\symb{\player,\eta(q,a),\RT_\Aa(p_1,T_{c_1}),\dots,\RT_\Aa(p_k,T_{c_k})},
	\end{align*}
	where as $T_0$ we take the whole $T$, and where $\player=\Eve$ if $q\in Q_E$ and $\player=\Adam$ otherwise.
	Notice that the run tree is a parity tree over the alphabet $\Sigma_d$, where $d$ is the maximal priority appearing in $\eta$.
	In particular every node of this tree has at least one child (if $\Aa$ reaches a leaf of $T$, then it has to loop there forever, generating an infinite branch in the run tree).
	We say that $\Aa$ \emph{accepts} $T$ if Eve wins in $\RT_\Aa(q_I,T)$.
	
	Suppose now that the input alphabet of $\Aa$ is $\Sigma_\omega=\Sigma\uplus\set{\omega}$,
	and consider a recursion scheme $\Gg$ whose output alphabet is $\Sigma$, and such that
	all node constructors appearing in $\Gg$ have arity at most $\ell_{\max}$.
	Recall the quantities $|\Gg|$ (size) and $A_\Gg$ (maximal arity of types) from \cref{sec:complexity}.
	Our goal is to create a ``product'' recursion scheme $\Gg_\Aa$, generating a tree that is won by Eve if and only if $\Aa$ accepts $\BT(\Gg)$.

	As a first step, we eliminate from $\BT(\Gg)$ nodes labeled by $\omega$ (recall that $\omega$ is a special symbol used to denote places where $\Gg$ diverges).
	To this end, we create a new recursion scheme, $\Gg'$, with an extended output alphabet $\Sigma_\varepsilon=\Sigma\uplus\set{\varepsilon}$.
	In $\Gg'$ we use the the same nonterminals as in $\Gg$,
	but we replace every rule $X\,y_1\,\dots\,y_k\to R$ by $X\,y_1\,\dots\,y_k\to \symb{\varepsilon,R}$.
	This way, after every $\rew_{\Gg'}$ step, a new $\varepsilon$-labeled node is generated.
	As a result, every $\omega$-labeled node in the generated tree is replaced by an infinite branch of $\varepsilon$-labeled nodes;
	additionally, a lot of $\varepsilon$-labeled nodes is inserted in different places.
	We also modify $\Aa$ to $\Aa'=(\Sigma_\varepsilon,\ell_{\max},Q,Q_E,q_I,\delta',\eta')$ working over such a modified tree.
	Let $Q_\mathsf{acc}$ be the set of those states $q\in Q$ for which $\RT_\Aa(q,\symb{\omega})$ is won by Eve
	(i.e., such that $\Aa$ accepts the single-node tree $\symb{\omega}$ from state $q$).
	It is standard to check whether $q\in Q_\mathsf{acc}$: it amounts to solving a finite parity game, and thus can be done in time exponential (actually, quasipolynomial)
	in the size of $\Aa$.
	Then, for every $q$ and $\ell$ we take
	\begin{align*}
		\delta'(q,a,\ell)&=\left\{\begin{array}{ll}
			\delta(q,a,\ell)&\mbox{if }a\in\Sigma,\\
			\set{(q,1)}&\mbox{if }a=\varepsilon\mbox{ and }\ell\geq 1,\\
			\set{(q,0)}&\mbox{if }a=\varepsilon\mbox{ and }\ell=0,
		\end{array}\right.
		&\mbox{and}\\
		\eta'(q,a)&=\left\{\begin{array}{ll}
			\eta(q,a)+2&\mbox{if }a\in\Sigma,\\
			2&\mbox{if }a=\varepsilon\mbox{ and }q\in Q_\mathsf{acc},\\
			1&\mbox{if }a=\varepsilon\mbox{ and }q\not\in Q_\mathsf{acc}.
		\end{array}\right.
	\end{align*}
	This way, the run tree $\RT_{\Aa'}(q_I,\BT(\Gg'))$ is quite similar to $\RT_{\Aa}(q_I,\BT(\Gg))$, except that
	\begin{itemize}
	\item	all priorities are increased by $2$ (so that they dominate over priorities $1$ and $2$ present in newly inserted nodes);
	\item	every subtree of the form $\RT_\Aa(q,\symb{\omega})$ is replaced by an infinite branch
		with nodes of priority $2$ (if $\RT_\Aa(q,\symb{\omega})$ was won by Eve) or $1$ (if $\RT_\Aa(q,\symb{\omega})$ was won by Adam);
	\item	some additional nodes of priority $1$ or $2$ are inserted in different places.
	\end{itemize}
	It is easy to see such a modification of the run tree does not change the winner,
	that is, $\Aa$ accepts $\BT(\Gg)$ if and only if $\Aa'$ accepts $\BT(\Gg')$.
	Moreover, we have $\ord(\Gg')=\ord(\Gg)$, and $A_{\Gg'}=A_\Gg$, and $|\Gg'|=\Oo(|\Gg|)$, and $|\Aa'|=\Oo(|\Aa|)$.
	
	Next, we change $\Gg'$ into an equivalent recursion scheme $\Gg''$ that is in a simple form (as in \cref{sec:complexity}).
	More precisely, we need it to be in a \emph{strong simple form}, defined as follows:
	for every rule $X\,y_1\,\dots\,y_k\to R$ of $\Gg''$,
	\begin{bracketenumerate}
	\item\label{ssf:1}
		$R$ has application depth at most $2$ (as defined in \cref{sec:complexity}), and
	\item\label{ssf:2}
		no proper subterm of $R$ starts with a node constructor (i.e., a node constructor may appear only at the outermost position).
	\end{bracketenumerate}
	
	\cref{ssf:2} can be ensured as follows:
	Consider a rule $X\,y_1\,\dots\,y_k\to R$, and a proper subterm $K$ of $R$ of the form $\symb{a,K_1,\dots,K_\ell}$.
	Then we replace the occurrence of $K$ with $Y\,y_1\,\dots\,y_k$ for a fresh nonterminal $Y$,
	and we add the rule $Y\,y_1\,\dots\,y_k\to K$.
	By repeating such a replacement for every ``bad'' node constructor in every rule, we clearly obtain an equivalent recursion scheme satisfying \cref{ssf:2}.
	The modification only multiplicates the size by a constant.
	The way of ensuring \cref{ssf:1} is described in prior work~\cite[Lemma~4.1]{trans-nonempty}, and it preserves \cref{ssf:2}.
	We have that $\ord(\Gg'')=\ord(\Gg)$, and $A_{\Gg''}\leq 2A_\Gg$, and $|\Gg''|=\Oo(A_\Gg\cdot|\Gg|)$.

	We now create the actual product of $\Gg''$ and $\Aa'$, denoted $\Gg_\Aa$.
	In the product, we replace terms of type $\alpha$ by terms of type $\alpha^\ddag$, which is defined by induction:
	\begin{align*}
		(\alpha_1\arr\dots\arr\alpha_k\arr\otyp)^\ddag = \left((\alpha_1^\ddag)^{|Q|}\arr\dots\arr(\alpha_k^\ddag)^{|Q|}\arr\otyp\right).
	\end{align*}
	Thus, we repeat each argument $|Q|$ types (and we modify it recursively).
	We fix some order on states in $Q$,
	and we write $K\,(L_q)_{q\in Q}$ for an application $K\,L_{q_1}\,\dots\,L_{q_{|Q|}}$, where $q_1,\dots,q_{|Q|}$ are all states from $Q$ listed in the fixed order.

	For every variable $y$ and for every state $q\in Q$ we consider a variable $y_q^\ddag$ of type $(\tp(y))^\ddag$.
	Likewise, for every nonterminal $X$ of $\Gg''$ and for every state $q\in Q$ as a nonterminal of $\Gg_\Aa$ we take $X_q^\ddag$ of type $(\tp(X))^\ddag$.

	We now define a function $\pr$ transforming terms;
	its value $\pr(q,M)$ is defined when $M$ is a term without node constructors, and $q\in Q$.
	We take
	\begin{itemize}
	\item	$\pr(q,X)=X_q^\ddag$;
	\item	$\pr(q,y)=y_q^\ddag$;
	\item	$\pr(q,K\,L)=\pr(q,K)\,(\pr(p,L))_{p\in Q}$.
	\end{itemize}
	Consider now a rule $X\,y_1\,\dots\,y_k\to R$ of $\Gg''$, and a state $q\in Q$.
	\begin{enumerate}
	\item\label[case]{rule-trans-1}
		If $R$ does not contain node constructors, as a rule for $X_q$ in $\Gg_\Aa$ we take
		\begin{align*}
			X_q\,(y_{1,p}^\ddag)_{p\in Q}\,\dots\,(y_{k,p}^\ddag)_{p\in Q}\to\pr(q,R).
		\end{align*}
	\item\label[case]{rule-trans-2}
		Since $\Gg''$ is in a strong simple form,
		the only remaining possibility is that the right-hand side $R$ is of the form $\symb{a,K_1,\dots,K_\ell}$, where $K_1,\dots,K_\ell$ are terms without node constructors.
		In $\Gg_\Aa$, instead of generating an $a$-labeled node, we should rather generate a fragment of the run tree of $\Aa'$ concerning this node.
		Let $\delta'(q,a,\ell)=\set{(p_1,c_1),\dots,(p_k,c_k)}$.
		Let also $\player=\Eve$ if $q\in Q_E$, and $\player=\Adam$ otherwise.
		Then, as a rule for $X_q$ in $\Gg_\Aa$ we take
		\begin{align*}
			X_q\,(y_{1,p}^\ddag)_{p\in Q}\,\dots\,(y_{k,p}^\ddag)_{p\in Q}\to\symb{\player,\eta'(q,a),L_1,\dots,L_k},
		\end{align*}
		where for every $i\in\scope{k}$ the term $L_i$ is defined as follows.
		If $c_i\geq 1$ (i.e., $\Aa$ descends to the $c_i$-th child of the $a$-labeled node in state $p_i$),
		we simply take $L_i=\pr(p_i,K_{c_i})$.
		Otherwise, when $c_i=0$ (i.e., when $\Aa$ remains in the $a$-labeled node),
		as $L_i$ we take $L_i=X_{p_i}\,(y_{1,p}^\ddag)_{p\in Q}\,\dots\,(y_{k,p}^\ddag)_{p\in Q}$,
		that is, we generate the run tree of $\Aa$ starting from the same $a$-labeled node, but now from the state $p_i$.
		It is important here that the considered node constructor is indeed on the outermost position, so that we can use the corresponding nonterminal to come back to it.
	\end{enumerate}

	It is not difficult to prove that $\Gg_\Aa$ generates the run tree $\RT_{\Aa'}(q_I,\BT(\Gg''))$.
	This run tree is won by Eve if and only if $\Aa'$ accepts $\BT(\Gg'')$, that is, if and only if $\Aa$ accepts $\BT(\Gg)$.

	Let us now determine the size of $\Gg_\Aa$.
	First, observe that the definition of the $\pr$ function for an application copies the argument $|Q|$ times.
	This copying is done again in the argument, which has a smaller application depth, so we can see that $|\pr(q,M)|\leq|M|\cdot|Q|^{\mathsf{ad}(M)}$.
	Because the application depth in $\Gg''$ is at most $2$, for \cref{rule-trans-1} above we have $|\pr(q,R)|\leq|R|\cdot|Q|^2$;
	we use $\pr(q,R)$ for every $q\in Q$, that is, $|Q|$ times.
	In \cref{rule-trans-2}, $\pr(p,K_c)$ for the largest subterm $K_c$ (or for some smaller subterm, instead) may be copied at most $|\delta'(q,a,\ell)|$ times in a single rule (i.e., for a single $q$),
	and we use it for all $q\in Q$, raising at most $\sum_{q\in Q}|\delta'(q,a,\ell)|\leq|\Aa'|$ copies.
	In both cases, arguments on the left-hand side are copied $|Q|^2$ times ($|Q|$ times in every of $|Q|$ copies of a rule).
	Altogether, we get $|\Gg_\Aa|=\Oo(|\Gg''|\cdot|\Aa'|^3)=\Oo(A_\Gg\cdot|\Gg|\cdot|\Aa|^3)$.
	The number of arguments in a type gets increased $|Q|$ times, so we have $A_{\Gg_\Aa}\leq A_{\Gg''}\cdot|Q|\leq 2A_\Gg\cdot|Q|$.
	The order does not change; $\ord(\Gg_\Aa)=\ord(\Gg)$.

\section{Proof of Lemma~\ref{shift2fulfilled}}\label{app:shift2fulfilled}

	\shiftfulfilled*

	\begin{proof}
		Let $p=p_1$, and let $s$ be the leader of $p_2,\dots,p_k$.
		Observe that the leader of $p_1,p_2,\dots,p_k$ is $\max(p,s)$.
		We have to prove that
		\begin{align*}
			r\preceq\max(p,s)\ \Leftrightarrow\ r\restr_p\preceq s.
		\end{align*}
		Recall the $\preceq$ order:
		\begin{align*}
			\cdots\preceq 5\preceq 3\preceq 1\preceq 2\preceq 4\preceq 6\preceq\cdots.
		\end{align*}
		We have the following cases:
		\begin{enumerate}
		\item	Suppose that $p$ is odd and $p>r$.
			Then, by definition, $r\restr_p=p+1$.
			Because $\max(p,s)>r$, we have $r\preceq\max(p,s)$ exactly when $\max(p,s)$ is even.
			This in turn holds when $s$ is even and $s\geq p+1$ (because $p$ is odd)
			and, by definition, can be expressed as $p+1\preceq s$, that is, $r\restr_p\preceq s$.
		\item	Suppose that $p$ is even and $p\geq r$.
			Then, by definition, $r\restr_p=p-1$.
			Because $\max(p,s)\geq r$, if $\max(p,s)$ is even then $r\preceq\max(p,s)$.
			Conversely, if $\max(p,s)$ is odd, then $\max(p,s)>p\geq r$ (since $p$ is even), so $r\succ\max(p,s)$.
			Thus, we have $r\preceq\max(p,s)$ exactly when $\max(p,s)$ is even.
			This in turn holds when either $s$ is even or ($s$ is odd and) $s\leq p-1$,
			and, by definition, can be expressed as $p-1\preceq s$, that is, $r\restr_p\preceq s$.
		\item	If none of the above holds, $r\restr_p=r$.
			\begin{enumerate}
			\item	If $s\geq p$, that is, $\max(p,s)=s$, both sides of the equivalence become $r\preceq s$.
			\item	Otherwise $s<p$, meaning that $\max(p,s)=p$;
				moreover $p\leq r$, and if $p$ is even then actually $p<r$.
				If $r$ is odd, then $r\preceq p$.
				Contrarily, if $r$ is even, then necessarily $p<r$, so $r\succ p$.
				This means that $r\preceq\max(p,s)$ exactly when $r$ is odd.
				We finish the proof observing that, due to $s<p\leq r$, we have $r\restr_p\preceq s$ (i.e., $r\preceq s$) exactly when $r$ is odd.
			\qedhere\end{enumerate}
		\end{enumerate}
	\end{proof}

\section{Proof of Lemma~\ref{std2ext}}\label{app:std2ext}

	In this section we prove \cref{std2ext}.
	We use here an ``expand'' function $\exp$ from extended terms to non-extended terms, which performs all the explicit substitutions written in front of an extended term:
	\begin{align*}
		\exp(\esubstdots{K}{L_1}{z_1}{L_\ell}{z_\ell})=K\subst{L_1/z_1}\dots\subst{L_\ell/z_\ell}.
	\end{align*}
	
	We now have a counterpart of \cref{std2ext} that is suitable for a (co)inductive proof;
	the original statement can be obtained by taking the starting nonterminal $X_0$ as $E$.

	\begin{lemma}\label{std2ext-aux}
		Let $\Gg=(\Xx,X_0,\Sigma,\Rr)$ be a recursion scheme.
		For every extended term $E$ over $(\Xx,\emptyset,\Sigma)$ it holds that $\BT_\Gg(\exp(E))=\BTsimpl(\BText_\Gg(E))$.
	\end{lemma}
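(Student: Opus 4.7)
My plan is to establish the equality of the two trees by coinduction, using the fact that tree equality is the greatest functional bisimulation. Define $R = \{(T_1, T_2) : \exists\, E.\ T_1 = \BT_\Gg(\exp(E)) \land T_2 = \BTsimpl(\BText_\Gg(E))\}$, where $E$ ranges over extended terms over $(\Xx, \emptyset, \Sigma)$. It suffices to show that every pair in $R$ has matching root labels and matching numbers of children, with pointwise $R$-related children; the coinductive characterization of tree equality then forces $T_1 = T_2$ for every $R$-pair, which is exactly what the lemma asserts.

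Two preliminaries underpin the argument. The \emph{expansion fact}: for $E = \esubstdots{K}{L_1}{z_1}{L_\ell}{z_\ell}$ with $K$ a non-extended term of type $\otyp$, iterating the second clause of the $\BText_\Gg$ definition $\ell$ times yields $\BText_\Gg(E) = \esubstdots{\BText_\Gg(K)}{\BText_\Gg(L_1)}{z_1}{\BText_\Gg(L_\ell)}{z_\ell}$. The \emph{simulation fact}: every $\erew_\Gg$ step $K \erew_\Gg K'$ on a non-extended $K$ induces a corresponding head step $\exp(\esubstdots{K}{L_1}{z_1}{L_\ell}{z_\ell}) \rew_\Gg \exp(\esubstdots{K'}{L_1}{z_1}{L_\ell}{z_\ell})$, because the fresh explicit substitutions introduced by the ext-reduction rule turn into ordinary substitutions under $\exp$ and collapse with the body exactly as in the matching $\rew_\Gg$ step. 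Both facts follow immediately by unfolding the definitions.

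For the coinductive step I fix $E = \esubstdots{K}{L_1}{z_1}{L_\ell}{z_\ell}$ with $K$ non-extended and analyze the $\erew_\Gg^*$-behaviour of $K$. If $K$ reaches an extended form $\esubstdots{\symb{a, M_1, \dots, M_k}}{N_1}{w_1}{N_s}{w_s}$, i.e.\ a node constructor wrapped in the explicit substitutions accumulated by $\gar > 0$ ext-reductions along the way, then the simulation fact gives $\exp(E) \rew_\Gg^* \symb{a, \dots}$ whose children are appropriately substituted versions of the $M_i$, while the expansion fact together with a single application of the first $\simpl$ rule rewrites $\BText_\Gg(E)$ into a tree with root $\symb{a, \dots}$ whose children are $\BTsimpl(\BText_\Gg(E_i))$ for extended terms $E_i$ obtained by wrapping $M_i$ in the combined substitution prefix $N_1/w_1, \dots, N_s/w_s, L_1/z_1, \dots, L_\ell/z_\ell$; roots and arities then match, and the children pair up via the witnesses $E_i$. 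If $K$ instead reaches a variable $z_i$ from the outer prefix, then one application of the second $\simpl$ rule combined with the expansion fact produces a new $R$-witness $E' = \esubstdots{L_i}{L_{i+1}}{z_{i+1}}{L_\ell}{z_\ell}$ for the same pair, whose substitution prefix is strictly shorter; iterating, we must eventually land in one of the other cases. If $K$ diverges under $\erew_\Gg^*$ (no finite sequence of ext-reductions reaches a node constructor or such a variable), the simulation fact implies that $\exp(E)$ admits no $\rew_\Gg^*$-reduction to a node constructor either, so $\BT_\Gg(\exp(E)) = \symb{\omega}$; on the extended side $\BText_\Gg(K) = \symb{\omega}$, and the $k = 0$ instance of the first $\simpl$ rule collapses the outer substitutions, giving $\BTsimpl(\BText_\Gg(E)) = \symb{\omega}$ as well.

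The main obstacle is the interaction between ext-reductions that introduce fresh explicit substitutions for trailing order-$0$ arguments and the outer substitution prefix of $E$. Each $\gar > 0$ step blocks further ext-reduction at the head and forces one to recurse into the inner non-extended body with an enlarged substitution prefix; keeping careful track of these accumulated substitutions, and confirming that the standard-tree children and the extended-tree children end up witnessed by the \emph{same} terms $E_i$, is the only genuinely delicate piece of bookkeeping. Once this is handled, the rest of the proof is a routine unfolding of definitions.
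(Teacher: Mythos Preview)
Your overall architecture---set up the relation $R$ and show it is a functional bisimulation by case analysis on the head of the inner non-extended term---is the same as the paper's coinduction with an internal lexicographic induction on (number of $\rew_\Gg$ steps, length of the substitution prefix).  The node-constructor case and the variable case are handled essentially as in the paper.

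There is, however, a genuine gap in your divergence case.  You write ``on the extended side $\BText_\Gg(K) = \symb{\omega}$'', but this is not true in general.  Recall the second clause of $\BText_\Gg$: whenever $K\erew_\Gg\esubst{F}{L}{z}$ (which happens whenever the head nonterminal of $K$ has $\gar>0$), we get $\BText_\Gg(K)=\esubst{\BText_\Gg(F)}{\BText_\Gg(L)}{z}$, not $\symb{\omega}$.  So if along the divergent sequence infinitely many ext-reduction steps introduce a fresh explicit substitution, then $\BText_\Gg(K)$ is an \emph{infinite nesting} of explicit substitutions, $\esubst{\esubst{\esubst{\dots}{N_3}{w_3}}{N_2}{w_2}}{N_1}{w_1}$, and your appeal to ``the $k=0$ instance of the first $\simpl$ rule'' is vacuous: there is no innermost $\symb{\omega}$ to collapse.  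The conclusion $\BTsimpl(\BText_\Gg(E))=\symb{\omega}$ is still correct in this situation, but for a different reason: neither $\simpl$ rule applies to an extended tree whose outermost explicit-substitution prefix is infinite, so the ``otherwise'' clause of $\BTsimpl$ fires.  The paper's proof splits the divergence case into exactly these two sub-cases (infinitely many $\gar>0$ steps vs.\ eventually all $\gar=0$) and treats them separately; you need to do the same.

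A smaller point: in your variable case you only treat variables $z_i$ from the \emph{outer} prefix $L_1/z_1,\dots,L_\ell/z_\ell$, but after several ext-reductions the inner non-extended term may equally well become one of the freshly introduced variables $w_j$ bound by the \emph{accumulated} prefix $N_1/w_1,\dots,N_s/w_s$.  That sub-case is handled identically (one step of the second $\simpl$ rule, total prefix length decreases), but it should be mentioned; otherwise the termination argument ``substitution prefix is strictly shorter'' is incomplete.
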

	
	\begin{proof}
		Let us start by noting two simple facts, which are used implicitly below.
		First, for every term $M$ there is at most one term $N$ such that $M\rew_\Gg N$.
		Second, if $M\rew_\Gg N$, then $\BT_\Gg(M)=\BT_\Gg(N)$.
		Moreover, the same two facts hold for $\erew_\Gg$ and $\BText_\Gg$, and for $\simpl$ and $\BTsimpl$.
		
		The proof of the \lcnamecref{std2ext-aux} is by coinduction on the structure of $\BT_\Gg(\exp(E))$.
		Let us write $E=\esubstdots{M}{P_1}{z_1'}{P_s}{z_s'}$, where $M$ is a non-extended term.
		For any term $K$ denote $\Xi(K)=K\subst{P_1/z_1'}\dots\subst{P_s/z_s'}$ and for any extended term $F$ denote $\Upsilon(F)=\esubstdots{F}{P_1}{z_1'}{P_s}{z_s'}$.
		Then, in particular, $E=\Upsilon(M)$ and $\exp(\Upsilon(F))=\Xi(\exp(F))$; obviously also $\exp(M)=M$.
		
		We have two cases.
		The important case is when the root of $\BT_\Gg(\exp(E))$ is labeled by some $a\in\Sigma$ (i.e., not by $\omega$).
		Then $\exp(E)\rew_\Gg^r\symb{a,N_1,\dots,N_n}$ for some $r\in\Nat$ and $\BT_\Gg(\exp(E))=\symb{a,\BT_\Gg(N_1),\dots,\BT_\Gg(N_n)}$.
		We perform an internal induction on the pair $(r,s)$ (i.e., first on $r$, and internally on $s$),
		and we consider three subcases depending on the shape of $M$.
		
		The first subcase is when $M$ consists of a nonterminal $X$ to which some arguments are applied.
		We can write $M=X\,K_1\,\dots\,K_k\,L_1\,\dots\,L_\ell$, where $\ell=\gar(X)$.
		Let $X\,y_1\,\dots\,y_k\,z_1\,\dots\,z_\ell\allowbreak\to R$ be the rule for $X$.
		Denote $M'=R\subst{K_1/y_1,\dots,K_k/y_k,z'_{s+1}/z_1,\dots,z_{s+\ell}'/z_\ell}$ for fresh variables $z'_{s+1},\dots,z'_{s+\ell}$,
		and $F=\esubstdots{M'}{L_1}{z_{s+1}'}{L_\ell}{z_{s+\ell}'}$, and $E'=\Upsilon(F)$.
		On the one hand, by definition, $M\erew_\Gg F$, so
		\begin{align*}
			\BText_\Gg(E)&=\esubstdots{(\BText_\Gg(M))}{\BText_\Gg(P_1)}{z_1'}{\BText_\Gg(P_s)}{z_s'}\\
			&=\esubstdots{(\BText_\Gg(F))}{\BText_\Gg(P_1)}{z_1'}{\BText_\Gg(P_s)}{z_s'}=\BText_\Gg(E').
		\end{align*}
		On the other hand, $\exp(E)=X\,(\Xi(K_1))\,\dots\,(\Xi(K_k))\,(\Xi(L_1))\,\dots\,(\Xi(L_\ell))$, so
		\begin{align*}
			\exp(E)\rew_\Gg {}&R\subst{\Xi(K_1)/y_1,\dots,\Xi(K_k)/y_k,\Xi(L_1)/z_1,\dots,\Xi(L_\ell)/z_\ell}\\
			={}&\Xi(R\subst{K_1/y_1,\dots,K_k/y_k,L_1/z_1,\dots,L_\ell/z_\ell})\\
			={}&\Xi(M'\subst{L_1/z_{s+1}'}\dots\subst{L_\ell/z_{s+\ell}'})=\Xi(\exp(F))=\exp(E');
		\end{align*}
		the first equality holds because the variables $z_1',\dots,z_s'$ do not appear in $R$,
		and the second equality holds because the variables $z_{s+1}',\dots,z_{s+\ell}'$ do not appear in $L_1,\dots,L_\ell$.
		This implies that $\BT_\Gg(\exp(E))=\BT_\Gg(\exp(E'))$, as well as $\exp(E')\rew_\Gg^{r-1}\symb{a,N_1,\dots,N_n}$
		(because $\exp(E)\rew_\Gg^r\symb{a,N_1,\dots,N_n}$).
		Finally, we use the induction hypothesis for $E'$ (with the parameter $r$ decreased by one),
		obtaining that $\BT_\Gg(\exp(E'))=\BTsimpl(\BText_\Gg(E'))$;
		together with aforementioned equalities this implies $\BT_\Gg(\exp(E))=\BTsimpl(\BText_\Gg(E))$, as needed.
		
		The second subcase is when $M$ consist of a variable to which some arguments are applied.
		Because $E$ has no free variables, this variable has to be one of $z_1',\dots,z_s'$; say $z_i'$.
		Moreover, $z_i'$ is of type $\otyp$, so actually there are no arguments, and we simply have $M=z_i'$.
		Let $E'=\esubstdots{P_i}{P_{i+1}}{z_{i+1}'}{P_s}{z_s'}$.
		Observe that
		\begin{align*}
			\BText_\Gg(E)={}&\esubstdots{z_i'}{\BText_\Gg(P_1)}{z_1'}{\BText_\Gg(P_s)}{z_s'}\\
			\simpl{}&\esubstdots{\BText_\Gg(P_i)}{\BText_\Gg(P_{i+1})}{z_{i+1}'}{\BText_\Gg(P_s)}{z_s'}=\BText_\Gg(E'),
		\end{align*}
		so $\BTsimpl(\BText_\Gg(E))=\BTsimpl(\BText_\Gg(E'))$.
		Simultaneously
		\begin{align*}
			\exp(E)=\Xi(z_i)=P_i\subst{P_{i+1}/z_{i+1}'}\dots\subst{P_s/z_s'}=\exp(E').
		\end{align*}
		We use the induction hypothesis for $E'$ (with the parameter $r$ unchanged, and with the parameter $s$ decreased by $i\geq 1$) obtaining that
		$\BT_\Gg(\exp(E'))=\BTsimpl(\BText_\Gg(E'))$;
		in consequence $\BT_\Gg(\exp(E))=\BTsimpl(\BText_\Gg(E))$, as needed.
		
		Finally, the third subcase is when $M$ starts with a node constructor (i.e., is of the form $\symb{b,K_1,\dots,K_k}$).
		Because $\Xi(M)=\exp(E)\rew_\Gg^r\symb{a,N_1,\dots,N_n}$, necessarily $r=0$ and $M=\symb{a,K_1,\dots,K_n}$ with $N_i=\Xi(K_i)$ for all $i\in\scope{n}$.
		Let $E_i=\Upsilon(K_i)$ for all $i\in\scope{n}$; then $N_i=\exp(E_i)$.
		We have $\BT_\Gg(\exp(E))=\symb{a,\BT_\Gg(\exp(E_1)),\dots,\BT_\Gg(\exp(E_n))}$ and
		\begin{align*}
			\BText_\Gg(E)={}&\esubstdots{\symb{a,\BText_\Gg(K_1),\dots,\BText_\Gg(K_n)}}{\BText_\Gg(P_1)}{z_1'}{\BText_\Gg(P_s)}{z_s'}\\
			\simpl{}&\symb{a,\esubstdots{(\BText_\Gg(K_1))}{\BText_\Gg(P_1)}{z_1'}{\BText_\Gg(P_s)}{z_s'},\dots,\\
				&\hspace{8em}\esubstdots{(\BText_\Gg(K_n))}{\BText_\Gg(P_1)}{z_1'}{\BText_\Gg(P_s)}{z_s'}}\\
			={}&\symb{a,\BText_\Gg(E_1),\dots,\BText_\Gg(E_n)},
		\end{align*}
		which implies that
		\begin{align*}
			\BTsimpl(\BText_\Gg(E))&=\BTsimpl(\symb{a,\BText_\Gg(E_1),\dots,\BText_\Gg(E_n)})\\
			&=\symb{a,\BTsimpl(\BText_\Gg(E_1)),\dots,\BTsimpl(\BText_\Gg(E_n))}.
		\end{align*}
		We use the hypothesis of coinduction; it says that $\BT_\Gg(\exp(E_i))=\BTsimpl(\BText_\Gg(E_i))$ for all $i\in\scope{n}$.
		By the above it follows that $\BT_\Gg(\exp(E))=\BTsimpl(\BText_\Gg(E))$, as needed.
		
		This finishes the case when the root of $\BT_\Gg(\exp(E))$ is labeled by some $a\in\Sigma$.
		It remains to consider the case when $\BT_\Gg(\exp(E))=\symb{\omega}$.
		We remark that this case is actually useless, because \cref{std2ext} is used in this paper only for parity recursion schemes, and they do not create $\omega$-labeled nodes.
		Nevertheless, let us sketch a proof also for this case, necessary for the current statement of the \lcnamecref{std2ext-aux}.
		The reason for $\BT_\Gg(\exp(E))=\symb{\omega}$ is that there exists an infinite sequence of terms $K_0,K_1,K_2,\dots$ with $K_0=\exp(E)$
		such that $K_i\rew_\Gg K_{i+1}$ for all $i\in\Nat$.
		The same argumentation as above, which was originally applied to a finite sequence of $\rew_\Gg$ reductions from $\exp(E)$ to $\symb{a,N_1,\dots,N_n}$,
		can be equally well applied to the infinite sequence of $\rew_\Gg$ reductions starting in $\exp(E)$.
		As a result, we obtain an infinite sequence of extended terms $E_0,E_1,E_2,\dots$ with $E_0=E$.
		When the first subcase was used for some $E_i$ (i.e., when the head of $E_i$ is a nonterminal), we have $\BText_\Gg(E_i)=\BText_\Gg(E_{i+1})$,
		and when the second subcase was used for some $E_i$ (i.e., when the head of $E_i$ is a variable), we have $\BText_\Gg(E_i)\simpl\BText_\Gg(E_{i+1})$;
		notice that the third subcase, concerning a node constructor, is impossible for an infinite sequence.
		If the second subcase occurs infinitely often, we have infinitely many $\simpl$ reductions starting in $\BText_\Gg(E)$, which implies that $\BTsimpl(\BText_\Gg(E))=\symb{\omega}$;
		we are done in this case.
		Thus, suppose otherwise: there is an index $j\in\Nat$ such that for all $i\geq j$ the first subcase is applied.
		For all $i\geq j$ let us write $E_i=\Upsilon_i(M_i)$, where $M_i$ is a non-extended term, and $\Upsilon_i$ appends some explicit substitutions.
		Looking more precisely at the first subcase above, we can observe that $\Upsilon_{i+1}(M_{i+1})=\Upsilon_i(\Upsilon_i'(M_{i+1}))$
		where again $\Upsilon'_i$ appends some explicit substitutions, and that $M_i\erew_\Gg\Upsilon'_i(M_{i+1})$.
		If infinitely many among $\Upsilon'_i$ are nonempty (i.e., each of them appends at least one explicit substitution),
		the extended tree $\BText_\Gg(E_j)$ starts with infinitely many explicit substitutions,
		$\BText_\Gg(E_j)=\esubst{\esubst{\esubst{\dots}{L_3}{z_3}}{L_2}{z_2}}{L_1}{z_1}$.
		No $\simpl$ reduction starts in such an extended tree, so $\BTsimpl(\BText_\Gg(E))=\BTsimpl(\BText_\Gg(E_j))=\symb{\omega}$.
		Otherwise, there is an index $k\in\Nat$ such that all $\Upsilon_i'$ for $i\geq k$ are empty, not adding any more explicit substitutions.
		Then $M_i\erew_\Gg M_{i+1}$ for all $i\geq k$, so $\BText_\Gg(M_k)=\symb{\omega}$.
		In consequence, $\BText_\Gg(E_k)=\Upsilon_k(\symb{\omega})\simpl\symb{\omega}$ and $\BTsimpl(\BText_\Gg(E))=\BTsimpl(\BText_\Gg(E_k))=\symb{\omega}$.
	\end{proof}

\section{Proof of Lemma~\ref{ext2trans}}\label{app:ext2trans}

	We start the proof of \cref{ext2trans} by showing two auxiliary \lcnamecrefs{ext2trans-aux}.
	We use an additional notation: for a function $Z$, for a sequence $(z_1',\dots,z_\ell')$, and for a function $A\in D_d^{\scope{\ell}}$,
	we write $Z\mapch{(z_1',\dots,z_\ell')\mapsto A}$ for $Z\mapch{z_i'\mapsto A(\ell+1-i)\mid i\in\scope{\ell}}$.
	
	\begin{lemma}\label{ext2trans-aux}
		Let $\Gg=(\Xx,X_0,\Sigma_d,\Rr)$ be a parity recursion scheme, let $\Zz$ be a set of variables of type $\otyp$, let $z_1',\dots,z_\ell'\not\in\Zz$ be variables of type $\otyp$,
		let $Z\colon\Zz\to D_d$,
		let $N$ be a term over $(\Xx,\Zz,\Sigma_d)$ of type $\otyp^\ell\arr\otyp$,
		let $L_1,\dots,L_\ell$ be terms over $(\Xx,\Zz,\Sigma_d)$ of type $\otyp$,
		let $T$ be an extended tree over $(\Zz\cup\set{z_1',\dots,z_\ell'},\Sigma_d)$, and let $U_1,\dots,U_\ell$ be extended trees over $(\Zz,\Sigma_d)$.
		If
		\begin{align}\label{ass-L}
			\trt_d(Z\restr_r,U_i)=\BT_{\Gg^\dag}(\tr_d(\emptyset,Z\restr_r,L_i))
		\end{align}
		for all $i\in\scope{\ell}$ and $r\in D_d$, and
		\begin{align}\label{ass-N}
			\trt_d(Z\mapch{(z_1',\dots,z_\ell')\mapsto A},T)=\BT_{\Gg^\dag}(\tr_d(A,Z,N))
		\end{align}
		for all $A\in D_d^{\scope{\ell}}$, then
		\begin{align}\label{conc-NL}
			\trt_d(Z,\esubstdots{T}{U_1}{z_1'}{U_\ell}{z_\ell'})=\BT_{\Gg^\dag}(\tr_d(\emptyset,Z,N\,L_1\,\dots\,L_\ell)).
		\end{align}
	\end{lemma}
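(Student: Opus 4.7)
My plan is to prove \cref{ext2trans-aux} by induction on $\ell$. For the induction to close cleanly, I will prove a slight generalization that allows $N$ to retain some trailing unapplied ground-type arguments: fix $m\geq 0$, let $N$ have type $\otyp^{\ell+m}\arr\otyp$, and let $A_0\in D_d^{\scope{m}}$ supply declarations for those trailing slots. Writing $A\ast A_0\in D_d^{\scope{\ell+m}}$ for the function that places $A_0$ in slots $1,\dots,m$ and $A$ in slots $m+1,\dots,m+\ell$, the assumption \cref{ass-N} is replaced by $\trt_d(Z\mapch{(z_1',\dots,z_\ell')\mapsto A},T)=\BT_{\Gg^\dag}(\tr_d(A\ast A_0,Z,N))$ for all $A\in D_d^{\scope{\ell}}$, and the conclusion becomes $\trt_d(Z,\esubstdots{T}{U_1}{z_1'}{U_\ell}{z_\ell'})=\BT_{\Gg^\dag}(\tr_d(A_0,Z,N\,L_1\,\dots\,L_\ell))$. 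The original statement is the case $m=0$.

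The base case $\ell=0$ is immediate: the only $A\in D_d^{\scope{0}}$ is $\emptyset$ and $\emptyset\ast A_0=A_0$, so the generalized \cref{ass-N} is exactly the conclusion.

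For the inductive step $\ell\geq 1$, I unfold the outermost layer on both sides. On the left, the clause of $\trt_d$ for the outermost explicit substitution $\esubst{\,\cdot\,}{U_\ell}{z_\ell'}$ produces a tree of shape $\symb{\Eve,1,T_1^{U_\ell},\dots,T_d^{U_\ell},T_{2d}}$. On the right, the outermost application is $(N\,L_1\,\dots\,L_{\ell-1})\,L_\ell$ with $N\,L_1\,\dots\,L_{\ell-1}$ of type $\otyp^{m+1}\arr\otyp$, so \cref{tr:case:6} of $\tr_d$ triggers, producing $\symb{\Eve,1,K_1^{L_\ell},\dots,K_d^{L_\ell},K_{2d}}$; since $\BT_{\Gg^\dag}$ commutes with node constructors, the right-hand side has the same shape. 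Matching children reduces the proof to two obligations: the $\symb{\Eve,r,\cdot}$ leaves require $\trt_d(Z\restr_r,U_\ell)=\BT_{\Gg^\dag}(\tr_d(\emptyset,Z\restr_r,L_\ell))$ for $r\in\scope{d}$, which is \cref{ass-L} at $i=\ell$; and for each $r\in D_d$ the recursive slot requires $\trt_d(Z\mapch{z_\ell'\mapsto r},\esubstdots{T}{U_1}{z_1'}{U_{\ell-1}}{z_{\ell-1}'})=\BT_{\Gg^\dag}(\tr_d(A_0\mapch{m+1\mapsto r},Z,N\,L_1\,\dots\,L_{\ell-1}))$.

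This last equation is exactly the generalized statement with parameters $\ell-1$, $m+1$, $A_0'=A_0\mapch{m+1\mapsto r}$, $Z'=Z\mapch{z_\ell'\mapsto r}$ over $\Zz\cup\{z_\ell'\}$, and $T$, $L_i$, $U_i$ for $i\in\scope{\ell-1}$ unchanged. To apply the induction hypothesis I verify its two assumptions: \cref{ass-L} survives verbatim because $z_\ell'$ does not occur in any $U_i$ or $L_i$; the generalized \cref{ass-N} transfers via the correspondence $A''\in D_d^{\scope{\ell-1}}\leftrightarrow A\in D_d^{\scope{\ell}}$ with $A(1)=r$ and $A(j+1)=A''(j)$, under which $A''\ast A_0'=A\ast A_0$ and the two ``$Z\mapch{\cdot}$'' functions agree on every variable free in $T$. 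The main obstacle is precisely this re-indexing bookkeeping, driven by the reverse-index convention $Z\mapch{(z_1',\dots,z_\ell')\mapsto A}=Z\mapch{z_i'\mapsto A(\ell+1-i)\mid i\in\scope{\ell}}$; once that correspondence is checked, the IH yields the required equation and the inductive step closes.
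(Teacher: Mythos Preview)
Your proof is correct, but it takes a different peeling direction from the paper. The paper also argues by induction on $\ell$, but in the inductive step it strips off the \emph{innermost} explicit substitution and the \emph{first} argument: it sets $N'=N\,L_1$ (of type $\otyp^{\ell-1}\arr\otyp$) and $T'=\esubst{T}{U_1}{z_1'}$, and then applies the induction hypothesis to $N',L_2,\dots,L_\ell$ and $T',U_2,\dots,U_\ell$ with the same $Z$ over the same $\Zz$. The unfolding of the $\trt_d$/$\tr_d$ gadget happens while \emph{verifying the new assumption} (your \cref{ass-N} for $N'$ and $T'$), using the original \cref{ass-N} and \cref{ass-L} at $i=1$. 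Because $N'$ again has type $\otyp^{\ell-1}\arr\otyp$, the statement needs no strengthening.

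You instead peel off the \emph{outermost} substitution $\esubst{\cdot}{U_\ell}{z_\ell'}$ and the \emph{last} argument $L_\ell$, doing the gadget unfolding directly in the conclusion. This forces you to carry the declaration $A_0$ for the already-decided trailing slots, hence your auxiliary parameter $m$ and the $A\ast A_0$ bookkeeping. The re-indexing check you describe (the correspondence $A(1)=r$, $A(j{+}1)=A''(j)$, giving $A''\ast A_0'=A\ast A_0$ and equality of the two extended $Z$-functions) is exactly what is needed and goes through. So your argument is sound; the paper's route simply avoids the generalization by peeling from the other end.
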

	
	\begin{proof}
		Induction on $\ell$.
		For $\ell=0$ the thesis, \cref{conc-NL}, says the same as one of the assumptions, \cref{ass-N} for $A=\emptyset$ (i.e., for the only function $A$ in $D_d^\emptyset$).
		Suppose that $\ell\geq 1$.
		In this case, we take $N'=N\,L_1$ and $T'=\esubst{T}{U_1}{z_1'}$,
		and we apply the induction hypothesis for the terms $N',L_2,\dots,L_\ell$ and for the extended trees $T',U_2,\dots,U_\ell$.
		The thesis of the \lcnamecref{ext2trans-aux}, \cref{conc-NL}, says the same as the thesis of the induction hypothesis.
		Also assumptions of the \lcnamecref{ext2trans-aux} described by \cref{ass-L} for $i\in\set{2,\dots,\ell}$ and all $r\in D_d$ can be directly passed to the induction hypothesis.
		We thus only need to prove an analogue of \cref{ass-N} for the induction hypothesis, that is,
		\begin{align}\label{ass-N-IH}
			\trt_d(Z\mapch{(z_2',\dots,z_\ell')\mapsto A'},T')=\BT_{\Gg^\dag}(\tr_d(A',Z,N'))
		\end{align}
		for all $A'\in D_d^{\scope{\ell-1}}$.
		To this end, fix some $A'\in D_d^{\scope{\ell-1}}$.
		Let $Z'=Z\mapch{(z_2',\dots,z_\ell')\mapsto A'}$.
		Recall that
		\begin{align*}
			\trt_d(Z',T')=\trt_d(Z',\esubst{T}{U_1}{z_1'})=\symb{\Eve,1,T_1^{U_1},T_2^{U_1},\dots,T_d^{U_1},T_{2d}},
		\end{align*}
		where $T_r^{U_1}=\symb{\Adam,1,T_r,\symb{\Eve,r,\trt_d(Z'\restr_r,{U_1})}}$ for $r\in\scope{d}$
		and $T_r=\trt_d(Z'\mapch{z_1'\mapsto r},T)$ for $r\in D_d$, and
		\begin{align*}
			\tr_d(A',Z,N')=\tr_d(A',Z,N\,L_1)=\symb{\Eve,1,N_1^{L_1},N_2^{L_1},\dots,N_d^{L_1},N_{2d}},
		\end{align*}
		where $N_r^{L_1}=\symb{\Adam,1,N_r,\symb{\Eve,r,\tr_d(\emptyset,Z\restr_r,L_1)}}$ for $r\in\scope{d}$
		and $N_r=\tr_d(A'\mapch{\ell\mapsto r},\allowbreak Z,N)$ for $r\in D_d$.
		Observe that $Z'\mapch{z_1'\mapsto r}=Z\mapch{(z_1',z_2',\dots,z_\ell')\mapsto A'\mapch{\ell\mapsto r}}$, so by \cref{ass-N}, where we take $A'\mapch{\ell\mapsto r}$ as $A$,
		it follows that $T_r=\BT_{\Gg^\dag}(N_r)$ for all $r\in D_d$.
		Moreover, because variables $z_2',\dots,z_\ell'$ do not appear in $U_1$, it holds that $\trt_d(Z'\restr_r,{U_1})=\trt_d(Z\restr_r,U_1)$;
		using additionally \cref{ass-L} for $i=1$ we obtain that
		\begin{align*}
			T_r^{U_1}=\symb{\Adam,1,\BT_{\Gg^\dag}(N_r),\symb{\Eve,r,\BT_{\Gg^\dag}(\tr_d(\emptyset,Z\restr_r,L_1))}}=\BT_{\Gg^\dag}(N_r^{L_1})
		\end{align*}
		for all $r\in\scope{d}$.
		This gives immediately \cref{ass-N-IH}.
	\end{proof}

	The second auxiliary \lcnamecref{trans-subst-com} says that the $\tr_d$ function commutes with substitution:

	\begin{lemma}\label{trans-subst-com}
		Let $R\subst{K_1/y_1,\dots,K_k/y_k}$ be a term over $(\Xx,\Zz,\Sigma_d)$, let $A\in D_d^{\scope{\gar(R)}}$, and let $Z\in D_d^\Zz$.
		Then
		\begin{align*}
			\tr_d(A,Z,R\subst{K_1/y_1,\dots,K_k/y_k}) &\\
				&\hspace{-4em}= (\tr_d(A,Z,R))\subst{\tr_d(B,Z,K_i)/y^{\dag_d}_{i,B}\mid i\in\scope{k},B\in D_d^{\scope{\gar(K_i)}}}.
		\end{align*}
	\end{lemma}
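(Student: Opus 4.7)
The plan is to prove Lemma~\ref{trans-subst-com} by structural induction on $R$, following the seven clauses in the definition of $\tr_d$. The base cases are all direct. For $R = X$ a nonterminal, or $R = y$ a variable not among $y_1, \ldots, y_k$ and outside $\dom(Z)$, both sides reduce to $X_A^{\dag_d}$ or $y_A^{\dag_d}$ respectively. For $R = z \in \dom(Z)$, the substitution does not touch $z$ and both sides reduce to the same $\upVdash$ or $\downVdash$ determined by the parity of $Z(z)$. For $R = y_i$, the LHS is directly $\tr_d(A, Z, K_i)$, while the RHS $\tr_d(A, Z, y_i) = y_{i, A}^{\dag_d}$ becomes $\tr_d(A, Z, K_i)$ after the outer substitution.

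For the tree-node case $R = \symb{\player, p, R_1, \ldots, R_m}$, I would unfold clause~(5) of $\tr_d$ on both sides so that each subterm $R_j$ (and its image $R_j\subst{K_i/y_i}$) appears under the shifted context $(\emptyset, Z\restr_p)$, invoke the induction hypothesis on each $R_j$ with those parameters, and observe that the node constructor commutes with the outer substitution. For the application case $R = K'\,L$ falling under clause~(7), I would apply the induction hypothesis once to $K'$ with parameters $(A, Z)$ and once to $L$ with parameters $(B, Z)$ for each $B \in D_d^{\scope{\gar(L)}}$; the outer substitution distributes through the application and across the $|D_d|^{\gar(L)}$ copies of the transformed $L$, so that the two sides match.

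The main work lies in the application case $R = K'\,L$ under clause~(6), where $\tp(K') = \otyp^{\ell+1}\arr\otyp$. Both sides unfold to the gadget $\symb{\Eve,1,K'^L_1,\ldots,K'^L_d,K'_{2d}}$, whose typical component $K'^L_r = \symb{\Adam,1,K'_r,\symb{\Eve,r,\tr_d(\emptyset,Z\restr_r,L)}}$ involves transforming $K'$ with parameters $(A\mapch{\ell+1 \mapsto r}, Z)$ and $L$ with $(\emptyset, Z\restr_r)$. I would apply the induction hypothesis to $K'$ and to $L$ at each relevant parameter choice, then verify piece by piece that each gadget component on the LHS matches the corresponding component on the RHS after the outer substitution. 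The hard part will be the sheer bookkeeping needed to align all the indices $r \in D_d$, $B \in D_d^{\scope{\gar(K_i)}}$, and the various shifts of $Z$; once the correspondence between gadget components is fixed, the equality reduces to a purely syntactic calculation.
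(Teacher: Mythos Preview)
Your plan is exactly the paper's approach; the paper's entire proof is the single sentence ``A straightforward induction on the structure of $R$.'' However, your sketch (and the paper's one-liner) skates over a real difficulty in clauses~(5) and~(6). In the node-constructor case you correctly note that the induction hypothesis must be invoked on each $R_j$ with the shifted parameter $Z\restr_p$; but that hypothesis yields
\[
\tr_d(\emptyset,Z\restr_p,R_j\subst{K_i/y_i})=(\tr_d(\emptyset,Z\restr_p,R_j))\subst{\tr_d(B,Z\restr_p,K_i)/y^{\dag_d}_{i,B}\mid\cdots},
\]
whereas the outer substitution in the target uses $\tr_d(B,Z,K_i)$, not $\tr_d(B,Z\restr_p,K_i)$. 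The same mismatch occurs for the $L$-component in clause~(6), where the induction hypothesis gives $\tr_d(B,Z\restr_r,K_i)$ in place of $\tr_d(B,Z,K_i)$. Your phrase ``observe that the node constructor commutes with the outer substitution'' is precisely the step that does not go through without further argument.

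In fact the lemma as literally stated is false: take $\Zz=\{z\}$ with $Z(z)=2$, $R=\symb{\Eve,2,y_1}$ where $\tp(y_1)=\otyp$ and $y_1\notin\Zz$, and $K_1=z$. Then the left-hand side is $\symb{\Eve,2,\tr_d(\emptyset,Z\restr_2,z)}=\symb{\Eve,2,\downVdash}$ (since $Z\restr_2(z)=2\restr_2=1$ is odd), while the right-hand side is $\symb{\Eve,2,y^{\dag_d}_{1,\emptyset}}\subst{\tr_d(\emptyset,Z,z)/y^{\dag_d}_{1,\emptyset}}=\symb{\Eve,2,\upVdash}$ (since $Z(z)=2$ is even). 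So no bookkeeping will make the induction close; an additional hypothesis ruling out free variables of the $K_i$ in $\dom(Z)$ (or some equivalent restriction) is needed for the statement to hold, and the paper's one-line proof shares this gap.
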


	\begin{proof}
		A straightforward induction on the structure of $R$.
	\end{proof}

	We now have a counterpart of \cref{ext2trans} that is suitable for a (co)inductive proof;
	the original statement can be obtained by taking the starting nonterminal $X_0$ as $M$.

	\begin{lemma}
		Let $\Gg=(\Xx,X_0,\Sigma_d,\Rr)$ be a parity recursion scheme.
		For every set $\Zz$ of variables of type $\otyp$, for every function $Z\colon\Zz\to D_d$, and for every term $M$ over $(\Xx,\Zz,\Sigma_d)$ of type $\otyp$,
		it holds that $\trt_d(Z,\BText_\Gg(M))=\BT_{\Gg^\dag}(\tr_d(\emptyset,Z,M))$.
	\end{lemma}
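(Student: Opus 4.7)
The plan is to proceed by coinduction on the output tree structure, with a case split on the head of $M$. Because $M$ has type $\otyp$ and every variable in $\Zz$ also has type $\otyp$, there are exactly three cases: $M$ is a single variable $z\in\Zz$; $M$ is a node constructor $\symb{\player,p,K_1,\dots,K_k}$; or $M=X\,K_1\,\dots\,K_k\,L_1\,\dots\,L_\ell$ is headed by a nonterminal $X$ with $\ell=\gar(X)$ trailing order-$0$ arguments.

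The first two cases are straightforward. For the variable case, the extended tree ext-generated from $z$ is simply $z$, so $\trt_d(Z,z)$ evaluates to $\top$ or $\bot$ via cases \cref{tr:case:3} or \cref{tr:case:4} according to the parity of $Z(z)$; on the other side $\tr_d(\emptyset,Z,z)$ equals $\downVdash$ or $\upVdash$, and the dedicated rules in $\Rr^{\dag_d}$ make these generate precisely $\top$ and $\bot$. For the node-constructor case, \cref{tr:case:5} and the analogous case~(5') of $\trt_d$ reveal the same root $\symb{\player,p,\dots}$ with children of the form $\trt_d(Z\restr_p,\BText_\Gg(K_i))$ and $\BT_{\Gg^\dag}(\tr_d(\emptyset,Z\restr_p,K_i))$, to which the coinductive hypothesis applies directly.

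The interesting case is the nonterminal case. I would perform the single ext-reduction $M\erew_\Gg F$, where $F=\esubstdots{R'}{L_1}{z_1'}{L_\ell}{z_\ell'}$ and $R'=R\subst{K_1/y_1,\dots,K_k/y_k,z_1'/z_1,\dots,z_\ell'/z_\ell}$, with $R$ the body of $X$'s rule. Since $\BText_\Gg$ is invariant under $\erew_\Gg$, the left-hand side equals $\trt_d(Z,\esubstdots{\BText_\Gg(R')}{\BText_\Gg(L_1)}{z_1'}{\BText_\Gg(L_\ell)}{z_\ell'})$, and I would invoke \cref{ext2trans-aux} with $N:=X\,K_1\,\dots\,K_k$, $T:=\BText_\Gg(R')$, and $U_i:=\BText_\Gg(L_i)$. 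Its hypothesis~(a) is just the coinductive hypothesis on each $L_i$. For hypothesis~(b) I would unfold $\tr_d(A,Z,N)$ using the final case of $\tr_d$ iteratively into $X_A^{\dag_d}\,(\tr_d(B_1,Z,K_1))_{B_1}\,\dots\,(\tr_d(B_k,Z,K_k))_{B_k}$, fire the rule for $X_A^{\dag_d}$ in $\Gg^\dag$, and apply \cref{trans-subst-com} to identify the outcome with $\tr_d(\emptyset,Z'',R')$, where $Z''$ extends $Z$ by $z_i'\mapsto A(\ell+1-i)$. Hypothesis~(b) thereby reduces to the statement of the lemma itself, now applied to $R'$ in place of $M$.

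The main obstacle is that $R'$ is not a structural subterm of $M$, so ordinary induction on term structure does not suffice; as in the proof of \cref{std2ext-aux}, the argument must interleave outer coinduction on the tree structure with an inner induction along successive nonterminal-unfolding steps. This inner induction is well-founded because a parity recursion scheme never produces an $\omega$-labeled node, so a node-constructor case is reached after finitely many nonterminal unfoldings at every position of the output tree. A secondary care-point is the alpha-renaming between the rule's bound $z_j$ (on which the companion map $\mapch{z_i\mapsto A(\ell+1-i)}$ used in the rule for $X_A^{\dag_d}$ is defined) and the fresh $z_j'$ introduced by the ext-reduction (on which $Z''$ is defined); the substitution $\subst{z_j'/z_j}$ inside $R'$ exactly absorbs this renaming, making the two $\tr_d$ values agree.
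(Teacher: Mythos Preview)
Your proposal is correct and follows essentially the same route as the paper: coinduction on the structure of $\BText_\Gg(M)$ with an inner induction on the number of $\erew_\Gg$ steps, the same three-way case split on the head of $M$, the invocation of \cref{ext2trans-aux} with $N=X\,K_1\,\dots\,K_k$ in the nonterminal case, and the use of \cref{trans-subst-com} together with the $z_j\mapsto z_j'$ renaming to discharge its second hypothesis. The only nuance you leave implicit is that the paper distinguishes, when verifying that hypothesis for $R'$, between $\ell\geq 1$ (where $\BText_\Gg(R')$ is a proper subtree and the coinductive hypothesis applies directly) and $\ell=0$ (where the inner induction is actually needed); your discussion of well-foundedness covers both, so this is not a gap.
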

	
	\begin{proof}
		The proof is by coinduction on the structure of $\BText_\Gg(M)$ and, internally,
		by induction on the number of $\erew_\Gg$ steps from $M$ needed to generate the next node of $\BText_\Gg(M)$.
		We have three cases, depending on the shape of $M$.
		
		The first case is when $M$ consists of a nonterminal $X$ to which some arguments are applied.
		We can write $M=X\,K_1\,\dots\,K_k\,L_1\,\dots\,L_\ell$, where $\ell=\gar(X)$.
		Let $X\,y_1\,\dots\,y_k\,z_1\,\dots\,z_\ell\to R$ be the rule for $X$.
		Then, denoting $R\subst{K_1/y_1,\dots,K_k/y_k,z_1'/z_1,\dots,z_\ell'/z_\ell}$ as $M'$ for some fresh variables $z_1',\dots,z_\ell'\not\in\Zz$,
		we have that
		\begin{align*}
			M\erew_\Gg \esubstdots{M'}{L_1}{z_1'}{L_\ell}{z_\ell'},
		\end{align*}
		which implies that
		\begin{align*}
			\BText_\Gg(M)&=\esubstdots{(\BText_\Gg(M'))}{\BText_\Gg(L_1)}{z_1'}{\BText_\Gg(L_\ell)}{z_\ell'}.
		\end{align*}
		We thus have to prove that
		\begin{align*}
			\trt_d(Z,\esubstdots{(\BText_\Gg(M'))}{\BText_\Gg(L_1)}{z_1'}{\BText_\Gg(L_\ell)}{z_\ell'})&\\
				&\hspace{-6em}= \BT_{\Gg^\dag}(\tr_d(\emptyset,Z,X\,K_1\,\dots\,K_k\,L_1\,\dots\,L_\ell)).
		\end{align*}
		This is the conclusion of \cref{ext2trans-aux}, when applied to $\BText_\Gg(M')$ as $T$, and $\BText_\Gg(L_i)$ as $U_i$ for $i\in\scope{\ell}$,
		and $X\,K_1\,\dots\,K_k$ as $N$.
		It remains to check assumptions of the \lcnamecref{ext2trans-aux}, namely
		\begin{align}\label{ass-L-use}
			\trt_d(Z\restr_r,\BText_\Gg(L_i))=\BT_{\Gg^\dag}(\tr_d(\emptyset,Z\restr_r,L_i))
		\end{align}
		for all $i\in\scope{\ell}$ and $r\in D_d$, and
		\begin{align}\label{ass-N-use}
			\trt_d(Z\mapch{(z_1',\dots,z_\ell')\mapsto A},\BText_\Gg(M'))=\BT_{\Gg^\dag}(\tr_d(A,Z,X\,K_1\,\dots\,K_k))
		\end{align}
		for all $A\in D_d^{\scope{\ell}}$.
		The first part, \cref{ass-L-use}, follows directly from the hypothesis of coinduction (because $\BText_\Gg(L_i)$ is a proper subtree of $\BText_\Gg(M)$).
		In order to prove \cref{ass-N-use}, fix some $A\in D_d^{\scope{\ell}}$.
		First observe that
		\begin{align}\label{eq:1}
			\trt_d(Z\mapch{(z_1',\dots,z_\ell')\mapsto A},\BText_\Gg(M'))=\BT_{\Gg^\dag}(\tr_d(\emptyset,Z\mapch{(z_1',\dots,z_\ell')\mapsto A},M')).
		\end{align}
		Indeed, if $\ell\geq 1$, this is the hypothesis of coinduction (because $\BText_\Gg(M')$ is a proper subtree of $\BText_\Gg(M)$),
		and if $\ell=0$, this is the hypothesis of the internal induction (because $M\erew_\Gg M'$).
		Next, recall that
		\begin{align*}
			\tr_d(A,Z,X\,K_1\,\dots\,K_k)&= X_A^{\dag_d}\,(\tr_d(B,Z,K_1))_{B\in D_d^{\scope{\gar(K_1)}}}\,\dots\,(\tr_d(B,Z,K_k))_{B\in D_d^{\scope{\gar(K_k)}}}\\
			&\hspace{-7em}\rew_{\Gg^\dag} \tr_d(\emptyset,\mapch{(z_1,\dots,z_\ell)\mapsto A},R)\subst{\tr_d(B,Z,K_i)/y_{i,B}^{\dag_d}\mid i\in\scope{k},B\in D_d^{\scope{\gar(K_i)}}},
		\end{align*}
		meaning that the tree generated by $\Gg^\dag$ from these terms is the same.
		Using this fact and \cref{eq:1}, we see that in order to obtain \cref{ass-N-use} it is enough to prove that
		\begin{align}\label{eq:2}
			\tr_d(\emptyset,Z\mapch{(z_1',\dots,z_\ell')\mapsto A},R\subst{K_1/y_1,\dots,K_k/y_k,z_1'/z_1,\dots,z_\ell'/z_\ell})&\nonumber\\
				&\hspace{-27em}=\tr_d(\emptyset,\mapch{(z_1,\dots,z_\ell)\mapsto A},R)\subst{\tr_d(B,Z,K_i)/y_{i,B}^{\dag_d}\mid i\in\scope{k},B\in D_d^{\scope{\gar(K_i)}}}.
		\end{align}
		Below we assume that the variables $y_1,\dots,y_k$ do not belong to $Z\cup\set{z_1',\dots,z_\ell'}$.
		(If they do, simply change their names to some fresh variables $y_1',\dots,y_k'$,
		and conduct the proof of the above equality for those variables and for $R\subst{y'_1/y_1,\dots,y'_k/y_k}$ in place of $R$.
		Anyway, these are only variables for which we substitute something, so changing their names does not influence the result;
		both sides of the above equality remain unchanged.)
		Observe that
		\begin{align}\label{eq:3}
			R\subst{K_1/y_1,\dots,K_k/y_k,z_1'/z_1,\dots,z_\ell'/z_\ell}=R\subst{z_1'/z_1,\dots,z_\ell'/z_\ell}\subst{K_1/y_1,\dots,K_k/y_k}.
		\end{align}
		Observe also that
		\begin{align}\label{eq:4}
			\tr_d(\emptyset,\mapch{(z_1,\dots,z_\ell)\mapsto A},R)&=\tr_d(\emptyset,\mapch{(z_1',\dots,z_\ell')\mapsto A},R\subst{z_1'/z_1,\dots,z_\ell'/z_\ell})\nonumber\\
			&=\tr_d(\emptyset,Z\mapch{(z_1',\dots,z_\ell')\mapsto A},R\subst{z_1'/z_1,\dots,z_\ell'/z_\ell});
		\end{align}
		above, in the first equality we have just renamed variables $z_1,\dots,z_\ell$ to $z_1',\dots,z_\ell'$;
		in the second equality we use the fact that variables from $\Zz$ (i.e., from the domain of $Z$) do not appear in $R\subst{z_1'/z_1,\dots,z_\ell'/z_\ell}$.
		Likewise, because variables $z_1',\dots,z_\ell'$ do not appear in the terms $K_i$, we have
		\begin{align}\label{eq:5}
			\tr_d(B,Z,K_i)=\tr_d(B,Z\mapch{(z_1',\dots,z_\ell')\mapsto A},K_i)
		\end{align}
		for all $i\in\scope{k}$ and all $B\in D_d^{\scope{\gar(K_i)}}$.
		Applying \cref{eq:3,,eq:4,,eq:5} to \cref{eq:2}, the latter \lcnamecref{eq:2} changes to
		\begin{align*}
			\tr_d(\emptyset,Z\mapch{(z_1',\dots,z_\ell')\mapsto A},R\subst{z_1'/z_1,\dots,z_\ell'/z_\ell}\subst{K_1/y_1,\dots,K_k/y_k})&\\
				&\hspace{-27em}=\tr_d(\emptyset,Z\mapch{(z_1',\dots,z_\ell')\mapsto A},R\subst{z_1'/z_1,\dots,z_\ell'/z_\ell})\\
					&\hspace{-23em}\subst{\tr_d(B,Z\mapch{(z_1',\dots,z_\ell')\mapsto A},K_i)/y_{i,B}^{\dag_d}\mid i\in\scope{k},B\in D_d^{\scope{\gar(K_i)}}}.
		\end{align*}
		This equality follows directly from \cref{trans-subst-com}, which finishes the proof in the first case.

		The second case is when $M$ consist of a variable to which some arguments are applied.
		All free variables of $M$ are elements of $\Zz$, so they are of type $\otyp$, hence necessarily $M=z$ for some $z\in\Zz$.
		If $Z(z)$ is odd, we have
		\begin{align*}
			\trt_d(Z,\BText_\Gg(M))=\trt_d(Z,z)=\top=\BT_{\Gg^\dag}(\downVdash)=\BT_{\Gg^\dag}(\tr_d(\emptyset,Z,M));
		\end{align*}
		likewise if $Z(z)$ is even, but with $\bot$ and $\upVdash$ instead of $\top$ and $\downVdash$.
		
		Finally, the third case is when $M$ starts with a node constructor, that is, it is of the form $\symb{\player,p,K_1,\dots,K_k}$.
		Then $\trt_d(Z,\BText_\Gg(K_i))=\BT_{\Gg^\dag}(\tr_d(\emptyset,Z,K_i))$ by the hypothesis of coinduction, so
		\begin{align*}
			\trt_d(Z,\BText_\Gg(M))&=\trt_d(Z,\symb{\player,p,\BText_\Gg(K_1),\dots,\BText_\Gg(K_k)})\\
				&=\symb{\player,p,\trt_d(Z,\BText_\Gg(K_1)),\dots,\trt_d(Z,\BText_\Gg(K_k))}\\
				&=\symb{\player,p,\BT_{\Gg^\dag}(\tr_d(\emptyset,Z,K_1)),\dots,\BT_{\Gg^\dag}(\tr_d(\emptyset,Z,K_k))}\\
				&=\BT_{\Gg^\dag}(\symb{\player,p,\tr_d(\emptyset,Z,K_1),\dots,\tr_d(\emptyset,Z,K_k)})\\
				&=\BT_{\Gg^\dag}(\tr_d(\emptyset,Z,\symb{\player,p,K_1,\dots,K_k}))=\BT_{\Gg^\dag}(\tr_d(\emptyset,Z,M)).\qedhere
		\end{align*}
	\end{proof}

\end{document}